%% file: main.tex
\documentclass[11pt]{article}

\usepackage[T1]{fontenc}
\usepackage{lmodern}
\usepackage{microtype}
\usepackage{flafter}
\usepackage{placeins}
\usepackage[a4paper,margin=28mm]{geometry}
\usepackage{amsmath,amssymb,amsthm}
\usepackage{booktabs,tabularx,array}
\usepackage{tikz}
\usetikzlibrary{arrows.meta}
\usepackage{xcolor}
\usepackage{hyperref}

\definecolor{ink}{HTML}{18232D}
\definecolor{muted}{HTML}{66737E}
\definecolor{laneblue}{HTML}{176B87}
\definecolor{lanepale}{HTML}{DDEFF4}
\definecolor{caprust}{HTML}{B45134}
\definecolor{cappale}{HTML}{F6E4DE}
\definecolor{fixedcell}{HTML}{DCE3E7}
\definecolor{livecell}{HTML}{18232D}
\definecolor{badred}{HTML}{B23A48}

\hypersetup{
  colorlinks=true,
  linkcolor=laneblue,
  citecolor=laneblue,
  urlcolor=laneblue,
  pdftitle={The Ultimate Fate of Life Is Not Shared},
  pdfauthor={Ziyue Gan, Ziran Li, and Jiadong Zhu}
}

\numberwithin{equation}{section}

\newtheorem{lemma}{Lemma}[section]
\newtheorem{theorem}[lemma]{Theorem}
\newtheorem{proposition}[lemma]{Proposition}
\newtheorem{corollary}[lemma]{Corollary}
\newtheorem{fact}[lemma]{Fact}

\newcommand{\Life}{\mathcal{G}}
\newcommand{\Om}{\Omega}
\newcommand{\Z}{\mathbb{Z}}
\newcommand{\cyl}[1]{[#1]}
\newcommand{\shift}{\sigma}
\newcommand{\rot}{\rho}
\newcommand{\refl}{\kappa}

\tikzset{
  proofbox/.style={draw=ink!28,rounded corners=2pt,fill=white,inner sep=7pt,align=center},
  flowarrow/.style={-{Stealth[length=2.2mm]},line width=0.8pt,draw=muted},
  markerframe/.style={draw=ink!45,line width=0.7pt},
  laneframe/.style={draw=laneblue,line width=1.5pt},
  capframe/.style={draw=caprust,densely dashed,line width=1.1pt},
  figurelabel/.style={font=\scriptsize\sffamily,text=ink},
}

\title{The Ultimate Fate of Life Is Not Shared}
\author{Ziyue Gan\thanks{\texttt{ziyuegan@ornisresearch.org}}\\
\small Xi'an Jiaotong University, Xi'an, China\\
\small OrnisResearch, Shanghai, China
\and
Ziran Li\thanks{\texttt{ziranli@ornisresearch.org} (corresponding author).}\\
\small OrnisResearch, Shanghai, China
\and
Jiadong Zhu\thanks{\texttt{zhujiadong2016@163.com}}\\
\small University of Connecticut, Storrs, CT, USA}
\date{}

\begin{document}
\maketitle

\begin{abstract}
The limit set of Conway's Game of Life collects the configurations that can
still appear at arbitrarily late times: those admitting predecessors of every
finite depth.  Answering a question of Salo and T\"orm\"a (ICALP 2022), we
prove that two fixed finite patterns can each occur in the limit set, yet can
never be found together in a single configuration of it, at any relative
position; equivalently, the spatial translation action on the limit set is
not topologically transitive.  Our method is to make a
persistent marker force a periodic lane to grow in sufficiently deep
predecessors, until two perpendicular lanes are forced to intersect and
prescribe incompatible values at a common cell.  The argument
uses two computer-verified local implications.
\end{abstract}

\medskip
\noindent\textbf{Keywords:} Game of Life, cellular automata, limit set,
symbolic dynamics, topological transitivity, ancestral forcing.

\section{Introduction}
\label{sec:intro}

Conway's Game of Life is a two-dimensional cellular automaton whose simple
local rule supports rich local behaviour and universal computation
\cite{gardner1970life,johnston2022conway,rendell2002turing}.  Here we study a
different global question: which configurations remain possible after
arbitrarily many generations, and how freely can their finite patterns
coexist?

Let $\Life$ denote the Life map on $X=\{0,1\}^{\Z^2}$.  Its
\emph{limit set} is
\[
  \Om=\bigcap_{n\geq0}\Life^n(X).
\]
Thus $x\in\Om$ if and only if, for every $n\geq0$, some configuration evolves into $x$
after $n$ steps, with the initial configuration allowed to depend on $n$.
Equivalently, $x$ has arbitrarily long chains of predecessors
\cite{salo-torma2022oracles,salo-torma2025preimages}.
Limit sets are standard objects in cellular automata and symbolic dynamics
\cite{ceccherini2010cellular,kurka2003topological,lind-marcus1995}, and their
languages can be complicated even in one dimension
\cite{hu1987limit,culik1989limit,maass1995sofic}.

The limit set carries both the temporal action of the Life map and the
spatial action of grid translations.  Salo and T\"orm\"a showed that the
temporal action is not topologically transitive
\cite{salo-torma2022oracles} and asked whether the spatial action is transitive
\cite[Question~3]{salo-torma2022oracles}.  In finite-pattern terms, spatial
transitivity asks whether, for every $U$ and $V$ occurring in $\Om$, some
configuration of $\Om$ contains both $U$ and $\shift_vV$ for some
$v\in\Z^2$ \cite{kurka2003topological}.

We answer this question in the negative.

\begin{theorem}
\label{thm:nontransitive}
The spatial action on the limit set of the Game of Life is not topologically
transitive.
\end{theorem}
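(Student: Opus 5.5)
We will exhibit two finite patterns $U$ and $V$, both occurring in $\Om$, such that no $x\in\Om$ contains $U$ and $\shift_vV$ for any $v\in\Z^2$. This is exactly the failure of spatial transitivity described above. The natural choice is to take $V$ to be a rotated copy of $U$, say $V=\rot U$, with $U$ a \emph{marker} pattern that carries a preferred direction. The plan follows the abstract: a marker that persists in the limit set forces, in deep predecessors, a periodic \emph{lane} emanating from it, and this lane grows with the depth. Two perpendicular markers then produce two perpendicular lanes that must cross, and at the crossing cell they demand different states.

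\textbf{Step 1 (local forcing, computer-verified).} I would fix a finite marker $M$ and a periodic lane pattern $L$, a strip of some fixed width with period $p$ along its axis. I would then establish two local implications by exhaustive SAT or backtracking search over finite windows.

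(i) Seed: if $y\in X$ satisfies $\Life^{k}(y)\supseteq M$ for a fixed $k$, then $y$ contains $M$ together with the first period of $L$ adjacent to it.

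(ii) Extension: if $\Life^{k}(y)$ contains $M$ followed by $m$ periods of $L$, then $y$ contains $M$ followed by $m+1$ periods of $L$.

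Here "contains" is at a fixed relative position, and the lane is chosen so that its own predecessors are forced to reproduce it shifted or extended. Both statements concern finite windows, so they are finite checks.

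\textbf{Step 2 (induction).} Let $x\in\Om$ contain $M$ at the origin. For each $n$, choose $y_n$ with $\Life^{nk}(y_n)=x$. Applying (i) and then (ii) repeatedly along the chain $y_n,\Life^{k}(y_n),\dots$ shows that $x$ itself contains $M$ followed by roughly $n$ periods of $L$. The point is that forcing propagates forward: each deeper ancestor forces more lane in the next layer, and the lane patterns are chosen to be $\Life^k$-invariant in place, so the extension survives back up to $x$.

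Since $n$ is arbitrary, $x$ contains an infinite half-lane from $M$ in direction $e_1$. By symmetry, a copy $\rot M$ forces an infinite half-lane in direction $e_2$. For any $v$, two such perpendicular rays from $0$ and from $v$ either meet, or can be made to meet by also using reflections: we take $U=M$ and $V=\refl\rot M$, oriented so that the rays always intersect. For example, rays toward $+e_1$ and toward $-e_2$ intersect whenever $v$ lies in the appropriate quadrant. To cover all $v$, we instead make $U$ contain $M$ pointing in two opposite directions, so that $U$ carries a full horizontal line and $V$ carries a full vertical line, which meet for every $v$.

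At the intersection cell, the horizontal lane forces a value determined by its phase, and the vertical lane forces one determined by its phase. I would design $L$ so that no alignment of phases is compatible. The cleanest way is to make the compatibility check a second finite verification: no $3\times3$ neighbourhood, or no predecessor window, is consistent with both lanes crossing at any of the $p^2$ relative phases.

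\textbf{Step 3 (nonemptiness).} We must still show that $U$ and $V$ individually occur in $\Om$. The configuration consisting of $M$ with its full lane should be a genuine preimage-closed object, for example a periodic-in-time configuration such as a still life or oscillator line, so it lies in $\Om$ trivially. This constrains the choice of lane to be an infinite oscillating or stationary line, whose ends are capped by the marker.

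\textbf{Main obstacle.} I expect the hard part to be finding $M$ and $L$ so that the extension implication (ii) genuinely holds, that is, so that Life preimages are rigid enough near a lane. Generic Life patterns have enormous preimage freedom. I would first search among stable lanes, such as rows of blocks or of blinkers, with a cap, using a SAT solver to test (ii) for increasing window sizes, and choose the lane whose crossing is locally contradictory.
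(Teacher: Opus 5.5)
Your overall architecture matches the paper's: a marker taken from a time-periodic configuration, a lane forced to grow in both directions, a quarter-turned copy, and a finite crossing contradiction. But Step~2 fails as written, because it places the lanes and the contradiction in $x$ itself.

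Implications (i) and (ii) constrain only \emph{predecessors}. Along a chain $x=x_0\leftarrow x_1\leftarrow\cdots\leftarrow x_n=y_n$ with $\Life^k(x_j)=x_{j-1}$, they put $M$ plus $j$ lane periods in $x_j$. So the long lane lives in the deepest layer $y_n$, not in $x$. Your claim that it ``survives back up to $x$'' because the lane is $\Life^k$-invariant is unjustified. The lane being invariant inside a periodic configuration does not make a finite-width strip of it determine its own image. Indeed, $\Life^t$ of a pattern on $D$ is determined only on $\{p: p+[-t,t]^2\subseteq D\}$, so after $nk$ steps a strip of fixed width determines nothing. Nothing you have lets you place a half-lane, let alone a crossing, in $x$.

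The missing idea is to derive the contradiction in a \emph{common ancestor}. If $x$ contains both $U$ and $\shift_vV$, run the backward implications for both occurrences along the \emph{same} chain. Then $y_n$ contains both lane segments, which cross once $n$ is large compared with $\|v\|_\infty$, and the crossing incompatibility gives the contradiction. This needs only $x\in\Life^{nk}(X)$ for every $n$; compare Proposition~\ref{prop:ancestral-obstruction} and Corollary~\ref{cor:quantitative}.

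Your extension step (ii) also hides the main difficulty rather than isolating it. As stated it is a family indexed by $m$ with growing premises, so it is not one finite check. To make it finite, it must be a translation-invariant implication at the tip of the lane. Then the body of the lane must reappear in each older layer: a fixed-width lane segment in $\Life^k(y)$ would have to force itself in $y$. That self-persistence of a thin lane is exactly what backward forcing tends to destroy. It is the width loss that blocked the earlier Salo--T\"orm\"a construction, and you give no mechanism for it. Stable lanes such as rows of blocks, whose preimages are very unconstrained, are an unpromising place to look.

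The paper never asks the lane to persist on its own. Only the marker persists: the Kynn\"os rail plus marker forcing give $U\sqsubset h(y)\Rightarrow U\sqsubset y$, so the fixed-width seeds $S_\pm$ lie in every layer. The pump is delayed: for $h=\Life^2$, whenever $h(x_2)=x_1$ and $h(x_1)=x_0$, having $S_+$ in all three layers forces $\shift_qS_+\sqsubset x_2$. The induction $\shift_{kq}S_+\sqsubset x_j$ for all $j\geq2k$ along the fixed chain then regenerates the pump's premises at every depth. Some mechanism of this kind is needed before your plan reduces to finitely many verifiable implications.
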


More precisely, we construct a fixed finite pattern $U$ and its quarter-turn
$V=\rot U$ such that
\[
  \cyl U\cap\Om\neq\varnothing,\qquad
  \cyl V\cap\Om\neq\varnothing,\qquad
  \cyl U\cap\cyl{\shift_vV}\cap\Om=\varnothing
  \quad\text{for every }v\in\Z^2,
\]
where $\cyl P$ denotes the set of configurations agreeing with $P$ on its
specified domain.  The same two patterns therefore exclude coexistence at
every relative position.

The distinction from earlier spatial results is logical rather than merely
quantitative.  Salo and T\"orm\"a proved that $\Om$ is non-sofic
\cite[Theorem~3]{salo-torma2022oracles}.
They also constructed distance-dependent obstructions
\cite[Theorem~7]{salo-torma2022oracles}, ruling out a sublinear gluing radius
in the cardinal directions.  None of these results supplies two
fixed cylinder sets whose shift orbits are disjoint.  Our construction changes
the quantifier order from a new witness at each scale to one witness for every
relative displacement.  Thus $\Om$ fails topological transitivity for a
single fixed pair of nonempty cylinder sets.

The obstacle to obtaining a fixed pair from the earlier construction is the
loss of horizontal width during backward growth
\cite[Lemma~22]{salo-torma2022oracles}.  Iterating that construction to force
a longer lane requires an increasingly wide initial patch.  Our marker
instead carries two pump states whose width is preserved, and the marker is
restored throughout the backward history.  The same fixed seeds can therefore
extend a periodic lane to arbitrary length.

Our main contribution is the fixed separating pair, which answers the open
spatial-transitivity question.  Its proof is organized around
Theorem~\ref{thm:pumping}, which packages finitely checkable local implications
into uniform ancestral signatures.  Proposition~\ref{prop:ancestral-obstruction}
then turns the resulting incompatibility into the spatial obstruction used
here.  Corollary~\ref{cor:quantitative} gives a finite-depth version in which
the excluded displacement range grows linearly with predecessor depth.

The proof mechanism is summarized in Figure~\ref{fig:proof-idea}.  We work in
pairs of Life steps, writing $h=\Life^2$, which has the same limit set as
$\Life$.  The marker $U$ is restored at the same position in every
$h$-predecessor, so its two pump states remain available at every backward
depth.  Repeated local forcing extends a vertical periodic lane in both
directions; the rotated marker similarly extends a horizontal lane.  If both
markers occurred in one configuration of $\Om$, a sufficiently deep common
predecessor would contain both lane segments, and they would prescribe
incompatible values at their intersection.

The proof is partly computer-assisted.  Its two computational ingredients
are local implications for marker forcing and lane pumping, each checked by
finite-state and SAT methods in Appendix~\ref{app:certificates}.
Persistence also uses the published Kynn\"os self-forcing lemma of Salo and
T\"orm\"a \cite[Lemma~21]{salo-torma2022oracles}.  All remaining steps,
including the passage from these finite implications to arbitrary predecessor
depth, are mathematical deductions proved in Section~\ref{sec:proof}.

Section~\ref{sec:prelim} gives the background.  Section~\ref{sec:proof}
proves the general reduction, constructs the Life-specific patterns, and
verifies the separating lemma.  Appendix~\ref{app:certificates} records the
reproducible finite-state and SAT proofs.

\begin{figure}[t]
\centering
\begin{tikzpicture}[font=\small\sffamily]
  \node[proofbox,minimum width=3.35cm,minimum height=2.35cm] (marker) at (0,0) {};
  \draw[fill=ink!9,draw=ink!35] (-0.85,-0.72) rectangle (0.85,0.72);
  \draw[fill=lanepale,draw=laneblue,line width=1.2pt] (-0.18,-1.00) rectangle (0.18,1.00);
  \node at (0,-1.38) {$U$ in the present};

  \node[proofbox,minimum width=3.35cm,minimum height=2.35cm] (ancestor) at (4.6,0) {};
  \draw[fill=ink!9,draw=ink!35] (3.75,-0.72) rectangle (5.45,0.72);
  \draw[fill=lanepale,draw=laneblue,line width=1.2pt] (4.42,-1.00) rectangle (4.78,1.00);
  \draw[laneblue,line width=1.2pt,-{Stealth[length=2mm]}] (4.60,0.98) -- (4.60,1.42);
  \draw[laneblue,line width=1.2pt,-{Stealth[length=2mm]}] (4.60,-0.98) -- (4.60,-1.42);
  \node at (4.6,-1.75) {older ancestors};

  \node[proofbox,minimum width=3.55cm,minimum height=2.35cm] (crossing) at (9.4,0) {};
  \draw[fill=lanepale,draw=laneblue,line width=1.2pt] (9.22,-1.00) rectangle (9.58,1.00);
  \draw[fill=cappale,draw=caprust,line width=1.2pt] (8.40,-0.18) rectangle (10.40,0.18);
  \node[text=badred,font=\bfseries\Large] at (9.4,0) {$\times$};
  \node at (9.4,-1.38) {rotated lanes conflict};

  \draw[flowarrow] (marker.east) -- node[above] {$\Life^{-2}$ in $\Om$} (ancestor.west);
  \draw[flowarrow] (ancestor.east) -- node[above] {grow and rotate} (crossing.west);
\end{tikzpicture}
\caption{The three steps of the proof: the pattern $U$ persists in backward
time, its lane grows at both ends, and the vertical lane and its quarter-turn
prescribe incompatible cell values at every crossing.}
\label{fig:proof-idea}
\end{figure}

\FloatBarrier

\section{Preliminaries}
\label{sec:prelim}

This section fixes the notation and recalls the symbolic-dynamical background
used in the proof.  We follow the conventions of \cite{salo-torma2022oracles}
and refer to \cite{lind-marcus1995,petersen1983ergodic} for further details.
Throughout, for integers $a\leq b$, we write
$[a,b]=\{a,a+1,\ldots,b\}$.  Let $A$ be a finite
alphabet, let $d\geq1$, and set $Y=A^{\Z^d}$.  A \emph{configuration} is a
point $x\in Y$, and a \emph{pattern} is a map
$P\colon D\to A$ with domain $D=\operatorname{dom}(P)\subset\Z^d$; the
pattern is \emph{finite} if $D$ is finite.  For patterns $P,Q$ we write
$P\sqsubset Q$ when $Q$ extends $P$, regarding a configuration as a pattern
on all of $\Z^d$.  The \emph{cylinder} of a finite pattern $P$ is
\[
  \cyl P=\{x\in Y: P\sqsubset x\}.
\]
Since $A$ is finite, $Y$ is compact, and the finite-pattern cylinders form a
clopen base for its product topology.  The group $\Z^d$ acts on $Y$ by the
\emph{shifts}
\[
  (\shift_v x)(p)=x(p-v),\qquad v,p\in\Z^d.
\]
We call $Y$, together with this action, the \emph{full shift} over $A$ in
dimension $d$.

A \emph{subshift} is a closed, shift-invariant subset $\Lambda\subseteq Y$.
A finite pattern $P$ \emph{occurs in} a set $\Lambda\subseteq Y$ if
$\shift_v\cyl P\cap\Lambda\neq\varnothing$ for some $v\in\Z^d$.  If $\Lambda$
is a subshift, shift invariance makes this equivalent to
$\cyl P\cap\Lambda\neq\varnothing$.  A \emph{cellular automaton} (CA) is a continuous map
$H\colon Y\to Y$ commuting with the shifts.  Its \emph{limit set} is
\[
  \Om_H=\bigcap_{n\geq0}H^n(Y).
\]
The limit set is a subshift.  A \emph{predecessor} of a configuration $x$
is a configuration $y$ with $H(y)=x$.  For a finite pattern $P$, the
following criterion characterizes occurrence in the limit set:
\[
 \cyl P\cap\Om_H\neq\varnothing
 \quad\Longleftrightarrow\quad
 \cyl P\cap H^n(Y)\neq\varnothing\quad\text{for every }n\geq0.
\]
Indeed, the sets on the right are nested compact sets, so if all are nonempty,
their intersection is nonempty.  Shift invariance makes occurrence at any
translate equivalent to occurrence at the specified coordinates.

Let $E\subseteq\Om_H$.  Its \emph{orbit saturation} is
$\bigcup_{v\in\Z^d}\shift_vE$.  The spatial action (the action by
translations) on $\Om_H$ is \emph{topologically transitive} if, for every
pair of nonempty relatively open sets $E,F\subseteq\Om_H$, there is
$v\in\Z^d$ with
$E\cap\shift_vF\neq\varnothing$.  Two patterns are \emph{incompatible} if
they disagree on some cell of their common domain.

A \emph{symmetry of the lattice} is an affine isometry of $\Z^d$; a symmetry
$\gamma$ acts on configurations by $(\gamma x)(p)=x(\gamma^{-1}p)$, and on
patterns accordingly.
A symmetry that commutes with $H$ preserves $\Om_H$ and acts naturally on
patterns and cylinders.  We use the clockwise quarter-turn $\rot$, acting
by $(\rot x)(a,b)=x(-b,a)$, and the reflection $\refl$ in the $x$-axis,
acting by $(\refl x)(a,b)=x(a,-b)$; both are symmetries of the lattice
commuting with Life.

Throughout, $H\colon Y\to Y$ denotes a general cellular automaton on the full
shift $Y$, with limit set $\Om_H$; for the Game of Life we write $\Life$ for
the map, $X=\{0,1\}^{\Z^2}$ for its full shift, and $\Om$ for its limit set.
The Life rule is
\[
  \Life(x)_p=1
  \quad\Longleftrightarrow\quad
  N_x(p)=3\ \text{ or }\ \bigl(x_p=1\text{ and }N_x(p)=2\bigr),
\]
where $N_x(p)$ counts the live cells in the eight-cell Moore neighbourhood
of $p$.  We write $h=\Life^2$.  Since the sets $\Life^n(X)$ are decreasing
and the even subsequence is cofinal, $\Om=\bigcap_{n\geq0}h^n(X)$.

\section{The proof}
\label{sec:proof}

The proof has a general part and a Life-specific part.
Section~\ref{sec:reduction} develops the general reduction and states the one
Life-specific lemma that remains.  Section~\ref{sec:construction} defines the
concrete patterns, and Section~\ref{sec:separating} verifies that lemma and
completes the proof.

\subsection{The reduction}
\label{sec:reduction}

Figure~\ref{fig:reduction-dag} gives the proof dependencies.
Theorem~\ref{thm:pumping} and
Proposition~\ref{prop:ancestral-obstruction} are both proved completely in
this subsection and do not depend on the later Life construction.  The
theorem builds an ancestral
signature; it is combined with the Life-specific finite results stated in
Section~\ref{sec:separating} to prove Lemma~\ref{lem:separating}.  The
proposition is a separate final step: once Lemma~\ref{lem:separating} has
provided the required patterns, the proposition yields
Theorem~\ref{thm:nontransitive}.  The Life-specific verification is deferred
to Section~\ref{sec:separating}.

\begin{figure}[ht]
\centering
\begin{tikzpicture}[font=\small\sffamily,
  every node/.style={proofbox,text width=3.15cm,minimum height=1.05cm}]
  \node (pump) at (-3.7,1.15)
    {Theorem~\ref{thm:pumping}\\ancestral pumping};
  \node (finite) at (-3.7,-1.15)
    {Section~\ref{sec:separating}\\Life-specific finite results};
  \node (sep) at (0.5,0)
    {Lemma~\ref{lem:separating}\\separating pattern};
  \node (prop) at (0.5,-2.05)
    {Proposition~\ref{prop:ancestral-obstruction}\\signature separation};
  \node (main) at (4.7,-0.65)
    {Theorem~\ref{thm:nontransitive}\\spatial non-transitivity};
  \draw[flowarrow] (pump.east) -- (sep.west);
  \draw[flowarrow] (finite.east) -- (sep.west);
  \draw[flowarrow] (sep.east) -- (main.west);
  \draw[flowarrow] (prop.east) -- (main.west);
\end{tikzpicture}
\caption{Proof dependencies for the reduction.  Every arrow $A\to B$ means
that statement or group of statements $A$ is used in the proof of statement
$B$.}
\label{fig:reduction-dag}
\end{figure}

\FloatBarrier

To motivate the ancestral-pumping reduction, consider the backward history
used in the Life construction below.  With $h=\Life^2$, write
$h(x_j)=x_{j-1}$, so that increasing $j$ means moving two Life generations
into the past.  Persistence puts the marker, and hence its two seeds, in
every $x_j$.  The pump uses a seed at the same position in
$x_{j-2},x_{j-1},x_j$ to force an outward translate of that seed in $x_j$.
The first translated seed is therefore present in every layer of depth at
least two, providing the premises for another application of the pump.
Inductively, the $k$th outward translate is present in every layer of depth
at least $2k$.  Each translate advances four rows and keeps the seed's
width fixed.  The two seeds extend the lane in opposite directions.
The ancestral-pumping reduction formalizes this induction and the passage
from finite lane segments to the obstruction at arbitrary separations.

The reduction below holds for any cellular automaton.  Fix a finite alphabet
$A$, a dimension $d\geq1$, a cellular automaton $H\colon Y\to Y$ with
$Y=A^{\Z^d}$, and its limit set $\Om_H$.  We first record that $H|_{\Om_H}$ is
onto.  For $x\in\Om_H$, the compact sets
\[
  H^{-1}(x)\cap H^n(Y),\qquad n\geq0,
\]
are nonempty, because $x\in H^{n+1}(Y)$, and nested; their intersection
therefore contains an $H$-preimage of $x$ in $\Om_H$.

Next we introduce the notion of forcing used in the reduction.  For a
finite pattern $U$, a possibly infinite pattern $B$ is a \emph{uniform
ancestral signature} of $U$ if, for every finite
$D\subseteq\operatorname{dom}(B)$, there exists an integer $N_D\geq0$ such
that
\begin{equation}
 (H|_{\Om_H})^{-n}(\cyl U\cap\Om_H)
 \ \subset\ \cyl{B|_D}\cap\Om_H
 \qquad\text{for every }n\geq N_D.             \label{eq:ancestral-signature}
\end{equation}
In words, for every finite part of $B$, all sufficiently deep predecessors in
the limit set of configurations containing $U$ contain that part at the
prescribed coordinates.

The ancestral-pumping reduction has two parts.  The first shows that a pattern
that persists and successively forces translated copies of finite subpatterns
has a signature.  In the motivating Life application, $H=h=\Life^2$, $\ell=2$,
$q=(0,4)$, $U$ is the persistent marker, and $S_+,S_-$ are its two seeds.
Thus the delay $\ell$ counts applications of $h$, corresponding to four
Life generations for each further outward translate.

\begin{theorem}[Ancestral pumping]
\label{thm:pumping}
Let $q\in\Z^d\setminus\{0\}$, let $\ell\geq1$ be an integer, and let $U$ be
a finite pattern occurring in $\Om_H$.  Let $S_+,S_-$ be finite subpatterns
of $U$, and, for $\epsilon\in\{+1,-1\}$, set $S_\epsilon=S_+$ when
$\epsilon=+1$ and $S_\epsilon=S_-$ when $\epsilon=-1$.  Suppose that every
$y\in Y$ satisfies
\begin{equation}
 \begin{aligned}
 U\sqsubset H(y)&\quad\Longrightarrow\quad U\sqsubset y,\\
 S_\epsilon\sqsubset H^i(y)\quad\text{for every }0\leq i\leq\ell
   &\quad\Longrightarrow\quad \shift_{\epsilon q}S_\epsilon\sqsubset y
   \quad(\epsilon\in\{+1,-1\}).
 \end{aligned}                                          \label{eq:abstract-pump}
\end{equation}
For $m\geq0$, let $W_m$ denote the following formal union of cell-value
prescriptions; its consistency is part of the conclusion:
\begin{equation}
 W_m=U\cup\bigcup_{k=0}^{m}\shift_{kq}S_+
       \cup\bigcup_{k=0}^{m}\shift_{-kq}S_- .          \label{eq:pumped-union}
\end{equation}
Then the following hold.
\begin{enumerate}
\item[(a)] For all integers $m,n\geq0$ and $x,y\in Y$ with $H^n(y)=x$, $U\sqsubset x$,
and $n\geq\ell m$, we have $W_m\sqsubset y$.
\item[(b)] Each $W_m$ is a consistent finite pattern, and
$W=\bigcup_{m\geq0}W_m$ is a consistent pattern with $W\sqsubset z$ for some
$z\in\Om_H$.
\item[(c)] $W$ is a uniform ancestral signature of $U$.
\end{enumerate}
\end{theorem}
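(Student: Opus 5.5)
Part (a) is the main step; I will prove it by induction on $m$, with $n$ and $y$ arbitrary. First, persistence: if $H^n(y)=x$ and $U\sqsubset x$, then $U\sqsubset H^j(y)$ for every $0\le j\le n$. To see this, apply the first implication of \eqref{eq:abstract-pump} to $H^{j}(y)$, whose image is $H^{j+1}(y)$, and descend from $j=n-1$ down to $j=0$.

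For the claim about translates, set $T_k^\epsilon=\shift_{\epsilon kq}S_\epsilon$. I will show that $T_k^\epsilon\sqsubset H^j(y)$ for every $j$ with $n-j\ge \ell k$. The induction runs on $k$.
\begin{itemize}
\item The case $k=0$ is persistence, since $S_\epsilon\sqsubset U$.
\item For $k\ge1$, fix $j$ with $n-j\ge \ell k$. For each $0\le i\le\ell$ we have $n-(j+i)\ge \ell(k-1)$, so by the induction hypothesis $T_{k-1}^\epsilon\sqsubset H^{i}(H^j(y))$.
\item $H$ commutes with shifts, so $S_\epsilon\sqsubset H^i(\shift_{-\epsilon(k-1)q}H^j(y))$ for all $0\le i\le\ell$.
\item The pump implication then gives $\shift_{\epsilon q}S_\epsilon\sqsubset \shift_{-\epsilon(k-1)q}H^j(y)$, that is, $T_k^\epsilon\sqsubset H^j(y)$.
\end{itemize}
Taking $j=0$ and $k\le m$, the assumption $n\ge\ell m$ gives every piece of $W_m$ inside $y$. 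The subtle point is the quantifier order: the induction must be stated for all layers $j$ of sufficient depth at once, not just for $y$.

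For (b), since $U$ occurs in $\Om_H$ and $\Om_H$ is a subshift, there is $x\in\cyl U\cap\Om_H$. For each $n$, $x\in H^n(Y)$, so some $y$ satisfies $H^n(y)=x$. By (a) with $m=\lfloor n/\ell\rfloor$, $W_m\sqsubset y$, so each $W_m$ is consistent. The patterns $W_m$ are nested, so $W$ is consistent.

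To get $z\in\Om_H$ with $W\sqsubset z$, I will use surjectivity of $H|_{\Om_H}$, recorded above. Choose a backward chain $x_0=x$, $x_{n+1}\in\Om_H$ with $H(x_{n+1})=x_n$. Then $W_m\sqsubset x_{\ell m}$ for every $m$. The sets $\cyl{W_m}\cap\Om_H$ are nonempty compact and nested, since $W_m\subset W_{m+1}$, so their intersection contains some $z$, and this $z$ satisfies $W\sqsubset z$.

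For (c), take a finite $D\subseteq\operatorname{dom}(W)$. Then $D\subseteq\operatorname{dom}(W_m)$ for some $m$, and I set $N_D=\ell m$. If $z\in\Om_H$ and $H^n(z)\in\cyl U$ with $n\ge N_D$, then (a) gives $W_m\sqsubset z$, hence $W|_D\sqsubset z$. This is exactly \eqref{eq:ancestral-signature}.
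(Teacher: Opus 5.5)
Your proof is correct and is essentially the paper's argument. The paper indexes the backward chain by depth ($x_j$ with $H(x_j)=x_{j-1}$, and the claim is stated for $j\geq\ell k$), whereas you use $H^j(y)$ with $n-j\geq\ell k$; otherwise the induction on $k$ over all sufficiently deep layers, the shift-conjugated application of the pump, and the nested compact cylinders with surjectivity of $H|_{\Om_H}$ for (b) and (c) are the same, and the only slip is cosmetic: you announce an induction on $m$ but correctly carry it out on $k$.
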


\begin{proof}
Let $x=x_0\leftarrow x_1\leftarrow\cdots\leftarrow x_n=y$ be the backward
$H$-chain in $Y$, so that $H(x_j)=x_{j-1}$ for $1\leq j\leq n$.  Repeated
application of the first implication in \eqref{eq:abstract-pump} gives
$U\sqsubset x_j$ for every $0\leq j\leq n$, and hence
$S_+,S_-\sqsubset x_j$ for every such $j$.

We claim that
\begin{equation}
  0\leq j\leq n,\quad j\geq\ell k
  \quad\Longrightarrow\quad
  \shift_{\epsilon kq}S_\epsilon\sqsubset x_j
  \quad(\epsilon\in\{+1,-1\})               \label{eq:abstract-induction}
\end{equation}
for every $k\geq0$.  The case $k=0$ is immediate.  Suppose the claim holds
for $k$, let $j\geq\ell(k+1)$, and fix $\epsilon\in\{+1,-1\}$.  For
$0\leq i\leq\ell$ we have $j-i\geq\ell k$, so
$\shift_{\epsilon kq}S_\epsilon\sqsubset x_{j-i}=H^i(x_j)$; equivalently,
$S_\epsilon\sqsubset H^i(\shift_{-\epsilon kq}x_j)$.  The second line of
\eqref{eq:abstract-pump}, applied to $y=\shift_{-\epsilon kq}x_j$, therefore
gives $\shift_{\epsilon q}S_\epsilon\sqsubset\shift_{-\epsilon kq}x_j$, that
is, $\shift_{\epsilon(k+1)q}S_\epsilon\sqsubset x_j$.  This proves the claim.
Taking $j=n$ and $k=0,\ldots,m$, and recalling that $U\sqsubset x_n$, gives
$W_m\sqsubset y$ whenever $n\geq\ell m$.  This proves~\textup{(a)}.

It remains to pass to $\Om_H$.  Choose a point $x\in\Om_H$ containing $U$.
For each $m$, surjectivity of $H|_{\Om_H}$ supplies $y\in\Om_H$ with
$H^{\ell m}(y)=x$; by~\textup{(a)}, $W_m\sqsubset y$, so $W_m$ is a
consistent finite pattern and $\cyl{W_m}\cap\Om_H$ is nonempty.  The
cylinders $\cyl{W_m}\cap\Om_H$ are nested and compact, so their intersection
contains a point of $\Om_H$ extending $W$.  Thus $W$ is consistent and is
extended by a point of $\Om_H$, which proves~\textup{(b)}.  Finally, let $D\subset
\operatorname{dom}(W)$ be finite; then $D\subset\operatorname{dom}(W_m)$ for
some $m$, and~\textup{(a)}, applied with $x=H^n(y)$, shows that
$(H|_{\Om_H})^{-n}(\cyl U\cap\Om_H)\subset\cyl{W|_D}\cap\Om_H$ for every
$n\geq\ell m$.  Hence $W$ is a uniform ancestral signature of $U$, proving
\textup{(c)}.
\end{proof}

The second part of the ancestral-pumping reduction turns a signature,
together with an incompatibility, into non-transitivity.

\begin{proposition}[Common-ancestor separation]
\label{prop:ancestral-obstruction}
Suppose a finite pattern $U$ occurs in $\Om_H$ and has a uniform ancestral
signature $W$.  If a symmetry $\tau$ of the lattice commutes with $H$ and a
pattern $B\sqsubset W$ is incompatible with $\shift_v\tau B$ for every
$v\in\Z^d$, then the spatial action on $\Om_H$ is not topologically
transitive.  In fact, the orbit saturations of $\cyl U\cap\Om_H$ and
$\cyl{\tau U}\cap\Om_H$ are disjoint nonempty shift-invariant relatively open
subsets of $\Om_H$.
\end{proposition}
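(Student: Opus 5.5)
We set $E=\cyl U\cap\Om_H$ and $F=\cyl{\tau U}\cap\Om_H$, and take their orbit saturations $\hat E=\bigcup_{v}\shift_vE$ and $\hat F=\bigcup_v\shift_vF$. We first check the easy properties. Since $\tau$ commutes with $H$, it preserves each $H^n(Y)$ and hence $\Om_H$; also $\tau\cyl U=\cyl{\tau U}$. So $F=\tau E\neq\varnothing$, because $U$ occurs in $\Om_H$. Cylinders are clopen, so $E,F$ are relatively open, and their saturations are unions of relatively open sets. Shift invariance of $\hat E,\hat F$ is immediate. Once $\hat E\cap\hat F=\varnothing$ is shown, non-transitivity follows. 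Indeed, $E\cap\shift_vF\subseteq\hat E\cap\hat F=\varnothing$ for every $v$, so the nonempty open pair $E,F$ witnesses failure of transitivity.

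For disjointness, suppose $z\in\shift_aE\cap\shift_bF$. Replace $z$ by $\shift_{-a}z$ to get $z\in E\cap\shift_wF$ with $w=b-a$. So $U\sqsubset z$ and $\shift_w\tau U\sqsubset z$, with $z\in\Om_H$.

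The key step is to pass to a common deep ancestor. Choose a finite $D\subseteq\operatorname{dom}(B)$ such that $B|_D$ already disagrees with $\shift_u\tau B$ at some cell, for the particular translate $u$ that will arise. Since incompatibility is witnessed at a single cell, a single cell of $\operatorname{dom}(B)$ suffices for each fixed $u$. Take $n\geq N_{D'}$ for the finite sets $D'$ needed below. Using surjectivity of $H|_{\Om_H}$ (recorded before Theorem~\ref{thm:pumping}) $n$ times, pick $y\in\Om_H$ with $H^n(y)=z$.

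Two applications of the signature property then give the contradiction. Since $U\sqsubset z$, \eqref{eq:ancestral-signature} gives $B|_D\sqsubset y$. For the rotated marker, set $y'=\tau^{-1}\shift_{-w}y$. Because $\tau$ and shifts commute with $H$, we have $H^n(y')=\tau^{-1}\shift_{-w}z$. This configuration lies in $\Om_H$ and contains $U$, since $\shift_w\tau U\sqsubset z$. Applying \eqref{eq:ancestral-signature} with a finite $D''$ gives $B|_{D''}\sqsubset y'$, so $\shift_w\tau(B|_{D''})\sqsubset y$.

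To make the conflict land in the finite parts, use the hypothesis at the translate $u$ determined by $w$. Note that $\shift_w\tau$ is a lattice symmetry whose translation part combines with $\tau$ into the form $\shift_{u}\tau$ for some $u$. By hypothesis $B$ and $\shift_u\tau B$ disagree at some cell $p$. Choose $D\ni p$ and $D''\ni(\shift_u\tau)^{-1}p$, and take $n\geq\max(N_D,N_{D''})$. Then $y$ carries two different values at $p$, which is a contradiction. The main obstacle is only this bookkeeping. We must get the translate $u$ right when $\tau$ is an affine isometry and the shift convention is $(\shift_vx)(p)=x(p-v)$, and we must choose $D,D''$ after $w$ is fixed. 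This is legitimate, because the uniform signature gives some finite $N$ for each finite set, and only the existence of some finite $n$ is needed.
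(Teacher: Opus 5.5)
Your argument is correct and is essentially the paper's: fix the relative displacement $w$, pick one cell $p$ where $B$ and $\shift_w\tau B$ disagree, and take an $H^n$-preimage in $\Om_H$ deep enough for both one-cell signature conditions, which yields a contradiction; translating then gives disjoint orbit saturations. You transport the ancestor by $\tau^{-1}\shift_{-w}$ where the paper transports the signature by $\shift_w\tau$, which amounts to the same step, and your extra bookkeeping with a translate $u$ is unnecessary, since $\shift_w\tau$ is already of the form $\shift_u\tau$ with $u=w$.
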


\begin{proof}
Every subpattern of a uniform ancestral signature is again a uniform ancestral
signature.  Since $H$ commutes with both translations and $\tau$, applying
$\shift_v\tau$ preserves the uniform ancestral signature property.  Thus
$\shift_v\tau B$ is a signature of $\shift_v\tau U$.  Since $B$
and $\shift_v\tau B$ are incompatible, there is a cell $p$ in their common
domain with $B(p)\neq\shift_v\tau B(p)$.  Let $N$ be large enough that both
signature conditions hold for $D=\{p\}$.  If some $z\in\Om_H$ contained both
$U$ and $\shift_v\tau U$, then surjectivity of $H|_{\Om_H}$ would supply
$y\in\Om_H$ with $H^N(y)=z$, and the two signatures would force $y(p)=B(p)$
and $y(p)=\shift_v\tau B(p)$ simultaneously, a contradiction.  Hence
$\cyl U\cap\Om_H$ and $\shift_v(\cyl{\tau U}\cap\Om_H)$ are disjoint for every
$v\in\Z^d$.  Let $A=\cyl U\cap\Om_H$ and
$C=\cyl{\tau U}\cap\Om_H$.  If their orbit saturations intersected, then
$\shift_aA\cap\shift_bC\neq\varnothing$ for some $a,b\in\Z^d$.
Translating by $-a$ would give $A\cap\shift_{b-a}C\neq\varnothing$,
contradicting the preceding conclusion.  Thus their orbit saturations are
disjoint nonempty shift-invariant relatively open subsets of $\Om_H$;
equivalently, the definition of transitivity fails.
\end{proof}

We now state the Life-specific input to the reduction.

\begin{lemma}[Separating pattern]
\label{lem:separating}
There exists a finite pattern $U$ occurring in $\Om$ such that, with respect
to $h=\Life^2$, $U$ has a uniform ancestral signature $W$ containing a
pattern $B$ that is incompatible with $\shift_v\rot B$ for every
$v\in\Z^2$.
\end{lemma}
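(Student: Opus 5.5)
The plan is to instantiate Theorem~\ref{thm:pumping} with $H=h=\Life^2$, $\ell=2$, $q=(0,4)$, and then read off the incompatible pattern $B$ from the resulting pumped union $W$.

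\textbf{Step 1: the marker and its seeds.} We first exhibit a concrete finite pattern $U$, the marker, containing two seed subpatterns $S_+$ (at its top) and $S_-$ (at its bottom). These seeds are horizontal strips of a vertical periodic lane of vertical period $4$. The lane's row pattern should be chosen \emph{not} invariant under the quarter-turn, in the sense used below. We also need $U$ to occur in $\Om$. For this we would exhibit a configuration containing $U$ that is periodic (or otherwise evidently in $\Om$) under $\Life$, for instance a still life or oscillator extension of $U$. Every such configuration lies in $\Om$, since it has predecessors of every depth.

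\textbf{Step 2: the two hypotheses of \eqref{eq:abstract-pump}.} The first line, $U\sqsubset h(y)\Rightarrow U\sqsubset y$, is the marker-forcing implication. It depends on finitely many cells of $y$, namely the radius-$2$ neighbourhood of $\operatorname{dom}(U)$. We would prove it by combining the Kynn\"os self-forcing lemma \cite[Lemma~21]{salo-torma2022oracles}, which forces the core of the marker, with a SAT certificate (Appendix~\ref{app:certificates}) showing that no assignment on that window satisfies $U\sqsubset h(y)$ but $U\not\sqsubset y$.

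The second line is the pump: $S_\epsilon\sqsubset h^i(y)$ for $i=0,1,2$ forces $\shift_{\epsilon q}S_\epsilon\sqsubset y$. This is again a finite statement on a window of radius $4$ around $\operatorname{dom}(S_\epsilon)\cup\operatorname{dom}(\shift_{\epsilon q}S_\epsilon)$, also checked by SAT. By $\refl$-symmetry, if $S_-=\refl S_+$ up to translation, only one direction needs checking.

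Theorem~\ref{thm:pumping} then gives a uniform ancestral signature $W$ containing the full vertical lane $L=\bigcup_{k\in\Z}\shift_{kq}S_{\pm}$. Here the seeds must be chosen so that consecutive translates tile a single column strip of fixed width $w$ and agree on overlaps; consistency is guaranteed by Theorem~\ref{thm:pumping}(b) in any case.

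\textbf{Step 3: the incompatibility.} Take $B$ to be this bi-infinite vertical strip, say on columns $[a,a+w-1]$, with entries depending only on the row modulo $4$. Then $\shift_v\rot B$ is a bi-infinite horizontal strip, periodic with period $4$ in the column direction. For every $v$ the two strips cross in a $w\times w$ square. On this square, $B$ is given by rows $r$ of a period-$4$ vertical word, and $\shift_v\rot B$ by the rotated pattern shifted. We need that for every relative offset (finitely many classes modulo $4$ in each coordinate, together with the fixed transverse position), the two disagree somewhere.

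This is a finite check over $16$ residue pairs. It holds, for example, if the strip contains a column that is constant $0$ while every row of one period contains a $1$ in that column's rotated counterpart. Concretely, we would aim for a lane whose cross-section has a dead column alongside a column with alternating live cells, so that the rotated version places a live cell in the dead column at every offset.

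\textbf{Main obstacle.} The hard part is Step 2, specifically finding a marker $U$ that is simultaneously in $\Om$, self-forcing, and whose seeds pump with a \emph{width-preserving} translate. Generic Life patterns lose width backward, which is exactly the problem with \cite[Lemma~22]{salo-torma2022oracles}. Our first attempt would be a SAT-guided search over small lane cross-sections of width $w\le 6$ and period $4$, requiring the pump implication on the window. We would then build $U$ around a Kynn\"os-based stable core to get persistence, and verify the incompatibility of Step 3 as a final filter.
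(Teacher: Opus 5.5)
Your reduction matches the paper's: a temporally periodic witness gives occurrence in $\Om$, Kynn\"os self-forcing plus a verified finite implication gives the first line of \eqref{eq:abstract-pump}, a verified pump for $S_+$ is transported to $S_-$ by a reflection reversing $q$, Theorem~\ref{thm:pumping} produces $W$, and a $16$-residue crossing check gives incompatibility. The gap is that the lemma is an existence statement. Its whole content is that a marker with these finitely checkable properties exists, and you never supply one. Steps~1 and~2 describe what you would exhibit and certify. Your last paragraph then replaces the construction with a search (small lanes of width at most $6$ around a Kynn\"os core), and you do not argue that this search succeeds. So what you have established is only the conditional ``if such $U$, $S_\pm$ exist, the lemma follows,'' which is Theorem~\ref{thm:pumping} plus a finite crossing check.

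The missing ingredients are the concrete construction and its two certified implications. The paper takes $U$ from a period-two configuration $c_0$ (Lemma~\ref{lem:period}), built by splicing a marching-band channel between two Kynn\"os half-planes with vertically $12$-periodic transition strips. The temporal period two is why $h=\Life^2$ is used. The channel supplies the width-three lane on columns $15$--$17$ with row words $000,001,001,101$, and $U$ contains this lane on rows $-14$ through $33$. You need such an intervening segment inside $U$, because translates of $S_+=c_0|_{[9,22]\times[28,31]}$ and $S_-=\shift_{(4,19)}\refl S_+$ only extend the lane outward.

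Kynn\"os also does not force a core of the marker. Instead, four lateral Kynn\"os patches form a rail $R$, and Fact~\ref{fact:kynnos} propagates $R$ into both $\Life(y)$ and $y$. The separately verified Lemma~\ref{lem:marker}, with premises $U\sqsubset h(y)$, $R\sqsubset\Life(y)$ and $R\sqsubset y$, then restores the remaining cells of the $3{,}959$-cell marker. The pump is the separately verified Lemma~\ref{lem:pump}.

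Finally, your sufficient criterion for the crossing (a dead column with every row word containing a live cell) fails for this lane, because $L_0=000$. Lemma~\ref{lem:crossing} instead uses the dead middle column and middle row to force $r\equiv s\equiv3\pmod 4$. It then finds the disagreement $L_1(0)=0\neq1=L_3(0)$ at $(i,j)=(0,2)$.
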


\noindent\emph{Proof.} See Section~\ref{sec:separating},
page~\pageref{proof:separating}.

\begin{proof}[Proof of Theorem~\ref{thm:nontransitive}]
By Lemma~\ref{lem:separating}, choose $U$, $W$, and $B$ such that $U$ occurs
in $\Om$, $W$ is a uniform ancestral signature of $U$, and $B\sqsubset W$ is
incompatible with $\shift_v\rot B$ for every $v\in\Z^2$.  These are precisely
the hypotheses of Proposition~\ref{prop:ancestral-obstruction} with
$H=h=\Life^2$ and $\tau=\rot$.  Since $\Om_h=\Om$ and $\rot$ commutes with
$h$, that proposition proves the claim.
\end{proof}

\subsection{The construction}
\label{sec:construction}

Here we construct a finite pattern $U$, a periodic lane $B$, and two
subpatterns $S_+,S_-\sqsubset U$; the properties required by
Lemma~\ref{lem:separating} are established in Section~\ref{sec:separating}.
Appendix~\ref{app:interfaces} collects the exact coordinates, phases, cell
data, and reconstruction rules for all these patterns.
The construction has two requirements.  First, $U$ must occur in the limit
set.  We ensure this by taking it from a period-two configuration $c_0$
that joins a marching-band channel to two Kynn\"os half-planes.  The
channel provides the lane and its pump states; the half-planes provide
room for finite self-forcing Kynn\"os patches.  The transition strips make
these pieces compatible with the same period-two orbit.

Second, every two-step predecessor of a configuration containing $U$ must
recover enough information to start the pump again.  The self-forcing
Kynn\"os patches supply a rail $R$ whose values are inherited by both
predecessor layers.  The marker-forcing calculation uses this inherited
rail together with $U$ in the output layer to recover $U$ in the older
layer.  Since $U$ contains both pump states, this restoration supplies the
seeds throughout a backward history.  The separate pump implication then
extends those seeds by one lane period at a time.  The periodic
configuration witnesses that the chosen patterns are realizable; the
forcing lemmas apply to arbitrary configurations satisfying their finite
premises.  We now specify the pieces used to meet these requirements.

The construction uses two spatiotemporally periodic configurations from
\cite[Section~3]{salo-torma2022oracles}.  The first is the Kynn\"os fixed
point, of spatial period $6\times6$; we fix a finite patch
$K\sqsubset\text{Kynn\"os}$, obtained from a rectangle by removing specified
cells near its corners, whose self-forcing property is recorded as
Fact~\ref{fact:kynnos}.  The second is the marching band introduced in
\cite[Section~3]{salo-torma2022oracles}, a configuration of temporal period two.

We now build $c_0$ by placing a finite-width central channel of the marching
band between two Kynn\"os half-planes, with a transition strip on each side.
The Kynn\"os background has vertical period $6$ and the marching band has
vertical period $4$; the transition strips use their common period $12$, so
the resulting configuration $c_0$ is vertically $12$-periodic.  The
width-three lane used below is a restriction of this channel.
By Lemma~\ref{lem:period}, $c_0$ and $c_1=\Life(c_0)$ form a period-two
orbit, and hence both lie in $\Om$.

The pattern $U\sqsubset c_0$, shown in Figure~\ref{fig:marker}, surrounds the
two ends of a channel segment.  Its lateral blocks lie in the Kynn\"os
half-planes, and its central portion contains the lane and the pump states.
The pattern $U$
contains four translates of $K$; write $R$ for their union and call it the
\emph{rail}.  These four patches supply the inherited constraints used in
Lemma~\ref{lem:marker}.  Using $h=\Life^2$ keeps the construction in the same
phase of the period-two background while the lane grows.

Next we record the lane.  On columns $15,16,17$ the configuration $c_0$ has
the period-four row words
\begin{equation}
  L_0=000,\qquad L_1=001,\qquad
  L_2=001,\qquad L_3=101,                    \label{eq:lane-words}
\end{equation}
and we let $B$ denote the resulting bi-infinite vertical lane,
\begin{equation}
  B(15+i,y)=L_{y\bmod4}(i),
  \qquad i\in\{0,1,2\},\ y\in\Z.             \label{eq:lane-definition}
\end{equation}
The marker $U$ contains $B$ on rows $-14$ through $33$.

Finally, we define the pump states.  Let $q=(0,4)$, let $S_+$ be the
restriction of $c_0$ to $[9,22]\times[28,31]$, and let $\psi$ be reflection
in the $x$-axis followed by translation by $(4,19)$, i.e.,
\[
  \psi=\shift_{(4,19)}\refl,
  \qquad S_-=\psi S_+.
\]
Then $S_-$ is supported on $[13,26]\times[-12,-9]$.  A direct comparison of
the specified positions and cell values gives
\begin{equation}
  S_+\sqsubset U,
  \qquad S_-\sqsubset U.                     \label{eq:states-in-marker}
\end{equation}
The two rectangles are shown in Figure~\ref{fig:seeds}.
The linear part of $\psi$
sends $q$ to $-q$, so $\psi\shift_q=\shift_{-q}\psi$.

\begin{figure}[ht]
\centering
\input{generated/marker-u.tex}
\caption{The pattern $U$, with the pump states $S_+$ and $S_-$ outlined by
dashed rectangles, together with one vertical period of $c_0$ on columns
$9$--$26$, enlarged.  Blue outlines the lane on columns $15$--$17$.
Black cells are live, pale cells are fixed dead, and white regions are
unspecified.  Coordinates refer to cell centres.}
\label{fig:marker}
\end{figure}

\begin{figure}[ht]
\centering
\input{generated/lane-seeds.tex}
\caption{The two pump states contained in $U$.  Their
blue subpatterns are periods of the same lane, and $S_-$ is obtained by
reflecting $S_+$ in the $x$-axis and then translating it.  Pale cells are
fixed dead and black cells are live.}
\label{fig:seeds}
\end{figure}

\FloatBarrier

\subsection{The separating lemma}
\label{sec:separating}

We use the published Kynn\"os self-forcing property, three finite lemmas
established in Appendix~\ref{app:certificates}, and a direct crossing
argument.  Table~\ref{tab:interfaces} identifies the formal statement and
verification method for each finite ingredient.  Figure~\ref{fig:separating-dag}
then shows how these statements, together with
Theorem~\ref{thm:pumping}, prove Lemma~\ref{lem:separating}.  Statements whose
detailed finite proofs are deferred to Appendix~\ref{app:certificates} carry
explicit page links below, so they may be read as black boxes on a first
reading.

\begin{table}[ht]
\centering
\small
\begin{tabularx}{\textwidth}{@{}>{\raggedright\arraybackslash}p{.19\textwidth}
  >{\raggedright\arraybackslash}p{.15\textwidth}X
  >{\raggedright\arraybackslash}p{.18\textwidth}@{}}
\toprule
Proof role & Formal result & Finite statement & Verification\\
\midrule
occurrence of $U$ in $\Om$
  & Lemma~\ref{lem:period}
  & $c_0$ and $c_1$ form a period-two orbit containing $U$
  & direct periodicity check\\
marker forcing, used for persistence
  & Lemma~\ref{lem:marker}
  & $U$ and the inherited rail force $U$ two steps earlier
  & finite-state and SAT\\
ancestral pumping
  & Lemma~\ref{lem:pump}
  & three occurrences of a pump state force its outward translate
  & finite-state and SAT\\
crossing incompatibility
  & Lemma~\ref{lem:crossing}
  & perpendicular period-four lanes disagree on every crossing
  & direct $3\times3$ argument\\
\bottomrule
\end{tabularx}
\caption{The finite ingredients used to prove Lemma~\ref{lem:separating}.
The two computational routes use the same input patterns but different
encodings and checkers.}
\label{tab:interfaces}
\end{table}

\FloatBarrier

\begin{figure}[ht]
\centering
\begin{tikzpicture}[font=\scriptsize\sffamily,
  every node/.style={proofbox,text width=2.65cm,minimum height=0.9cm}]
  \node (fact) at (-4.5,1.5)
    {Fact~\ref{fact:kynnos}\\Kynn\"os self-forcing};
  \node (marker) at (-4.5,0.1)
    {Lemma~\ref{lem:marker}\\marker forcing};
  \node (period) at (-0.9,2.2)
    {Lemma~\ref{lem:period}\\period-two witness};
  \node (persist) at (-0.9,0.8)
    {Lemma~\ref{lem:persistence}\\persistence};
  \node (pumpfinite) at (-0.9,-0.6)
    {Lemma~\ref{lem:pump}\\lane pump};
  \node (pumptheorem) at (-0.9,-2.0)
    {Theorem~\ref{thm:pumping}\\ancestral pumping};
  \node (cross) at (-0.9,-3.4)
    {Lemma~\ref{lem:crossing}\\crossing};
  \node (separating) at (3.6,-0.6)
    {Lemma~\ref{lem:separating}\\separating pattern};
  \draw[flowarrow] (fact.east) -- (persist.west);
  \draw[flowarrow] (marker.east) -- (persist.west);
  \draw[flowarrow] (persist.east) -- (separating.west);
  \draw[flowarrow] (period.east) -- (separating.west);
  \draw[flowarrow] (pumpfinite.east) -- (separating.west);
  \draw[flowarrow] (pumptheorem.east) -- (separating.west);
  \draw[flowarrow] (cross.east) -- (separating.west);
\end{tikzpicture}
\caption{Proof dependencies used in Section~\ref{sec:separating}.  Every
arrow $A\to B$ means that statement $A$ is used in the proof of statement
$B$.}
\label{fig:separating-dag}
\end{figure}

\begin{fact}[Kynn\"os self-forcing]\label{fact:kynnos}
For every $v\in\Z^2$ and $x,y\in X$,
\[
 \Life(y)=x,\quad \shift_vK\sqsubset x
 \quad\Longrightarrow\quad \shift_vK\sqsubset y.
\]
Thus, whenever a configuration contains a translate $\shift_vK$, every
one-step predecessor contains the same translate.
\end{fact}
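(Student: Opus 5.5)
The plan is to reduce the statement to a single finite check, which is the content of \cite[Lemma~21]{salo-torma2022oracles}, and then to certify that check independently.

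\textbf{Reduction to $v=0$.} Since $\Life$ commutes with every shift, the hypotheses $\Life(y)=x$ and $\shift_vK\sqsubset x$ are equivalent to $\Life(\shift_{-v}y)=\shift_{-v}x$ and $K\sqsubset\shift_{-v}x$. The conclusion $\shift_vK\sqsubset y$ is likewise equivalent to $K\sqsubset\shift_{-v}y$. So it suffices to prove the claim for $v=0$.

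\textbf{Reduction to a finite implication.} Write $D=\operatorname{dom}(K)$ and let $D^+$ be its Moore $1$-neighbourhood, that is, the set of cells within Chebyshev distance $1$ of $D$. The Life rule is radius one, so the restriction $\Life(y)|_D$ depends only on $y|_{D^+}$. Also $D\subseteq D^+$. The fact is therefore equivalent to the following statement about finitely many patterns: every $P\in\{0,1\}^{D^+}$ whose image under the local rule agrees with $K$ on $D$ satisfies $P|_D=K$.

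\textbf{Certifying the finite implication.} I would verify this in two independent ways, matching the methods of Appendix~\ref{app:certificates}.
\begin{itemize}
\item \emph{SAT.} Introduce one Boolean variable per cell of $D^+$. For each $p\in D$, encode the Life constraint ``next state at $p$ equals $K(p)$'' as a cardinality constraint on the nine neighbourhood variables, or as an explicit CNF. Add one clause $\bigvee_{p\in D}\bigl(y_p\neq K(p)\bigr)$ expressing that the predecessor differs from $K$ somewhere on $D$. The fact holds exactly when this formula is unsatisfiable, and a DRAT certificate would record the unsatisfiability.
\item \emph{Finite-state.} Sweep $D^+$ column by column. The states are pairs of adjacent predecessor columns, restricted to those that are consistent with $K$ on the columns already processed. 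The check is that every accepting path assigns the values of $K$ on $D$.
\end{itemize}

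\textbf{Main obstacle.} The difficulty lies in the boundary of $K$, not in the argument above. Cells near the edge of a full rectangle are typically not self-forcing, because the unconstrained outer ring of $D^+$ admits spurious predecessors there. This is why $K$ is a rectangle with cells removed near its corners, and the exact domain must be chosen so that the implication actually holds. If the check fails for a candidate domain, I would first shrink $D$ by removing the corner cells where satisfying assignments disagree with $K$, and iterate until the formula becomes unsatisfiable. I would then confirm that the resulting domain still contains the rail patches used in Lemma~\ref{lem:marker}.
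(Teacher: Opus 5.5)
Your proposal is correct and follows the paper's route. The paper proves Fact~\ref{fact:kynnos} by citing \cite[Lemma~21]{salo-torma2022oracles} and recertifies it in Appendix~\ref{app:clausal} with your SAT formula: $D_0=\operatorname{dom}(K)$ with $584$ cells, $D_1=D_0+[-1,1]^2$ with $688$ cells, premise $K@0$, and one clause negating the target $K@1$. The translation reduction is implicit there. The differences are cosmetic: the paper uses no finite-state sweep for this fact, and your plan to shrink $D$ is unnecessary, because $K$ is fixed explicitly in the appendix and the check succeeds for that exact domain.
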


\noindent\emph{Proof.} This is the Kynn\"os self-forcing result of Salo and
T\"orm\"a~\cite[Lemma~21]{salo-torma2022oracles}.

\begin{lemma}[Period-two witness]\label{lem:period}
The configuration $c_0$ is vertically $12$-periodic and satisfies
$\Life(c_0)=c_1$ and $\Life(c_1)=c_0$.
\end{lemma}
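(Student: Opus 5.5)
The claim is a finite verification, and I would reduce it to checking one vertical period together with the local nature of the Life rule. Let $T=\shift_{(0,12)}$. The first step is vertical periodicity, $Tc_0=c_0$. The definition of $c_0$ in Appendix~\ref{app:interfaces} specifies each column as a periodic word. On the Kynn\"os half-planes the vertical period is $6$, on the marching-band channel it is $4$, and the transition strips are given explicitly with period $12$. Since $12$ is a common multiple of $6$ and $4$, every column of $c_0$ is invariant under vertical translation by $12$, so $Tc_0=c_0$.

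Next, since $\Life$ commutes with shifts, $c_1=\Life(c_0)$ satisfies $Tc_1=\Life(Tc_0)=c_1$. So both configurations are determined by their restriction to a horizontal strip $\Z\times[0,11]$, and computing $\Life(c_0)$ on this strip only needs $c_0$ on $\Z\times[-1,12]$, which periodicity supplies. Horizontally the configuration is not periodic, but it is built from finitely many column types. Far to the left and far to the right it coincides with the Kynn\"os pattern, which is a fixed point of $\Life$ by \cite[Section~3]{salo-torma2022oracles}. Any cell whose $3\times3$ neighbourhood lies entirely inside one Kynn\"os half-plane therefore satisfies $\Life(c_0)_p=c_0(p)$ and, applying the rule again, $\Life^2(c_0)_p=c_0(p)$. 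Cells whose neighbourhood lies inside the marching-band channel inherit the period-two behaviour of the marching band, $\Life^2=\mathrm{id}$ there, by \cite[Section~3]{salo-torma2022oracles}. Since $\Life^2(c_0)$ at $p$ depends only on $c_0$ in the $5\times5$ box around $p$, I take the fixed-point and period-two claims in these interior regions to mean exactly this radius-two condition.

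What remains is a finite window. It consists of the columns within distance two of the transition strips and the channel boundary, crossed with one vertical period of $12$ rows. On this window I would compute $c_1=\Life(c_0)$ and then $\Life(c_1)$ by direct evaluation of the rule and check that the result equals $c_0$. This is a bounded table, roughly $(\text{strip widths}+4)\times 12$ cells per side, applied twice. It can be done by hand, or by the same scripts used in Appendix~\ref{app:certificates}, and I would also record $c_1$ explicitly on the window as a certificate.

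I expect the transition strips to be the main obstacle. They are the only place where the behaviour is not inherited from known periodic objects, and the two phases must match both Kynn\"os stability on one side and the marching band's alternation on the other, across the radius-two overlap. The honest content of the lemma is this finite check. The only subtlety is making sure the window is wide enough that $5\times5$ neighbourhoods straddling a boundary are all included. Finally, $c_0,c_1\in\Om$ follows because each is an $n$-fold image of the other (or of itself) for every $n$.
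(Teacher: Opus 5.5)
Your proposal is correct and takes essentially the same route as the paper. Vertical $12$-periodicity and the radius-two locality of $\Life^2$ reduce $\Life^2(c_0)=c_0$ to a direct evaluation on one vertical period of a finite window. The difference is only bookkeeping: the paper evaluates $\Life^2(c_0)$ on all of $[-8,66]\times[0,11]$, checking the whole channel directly and covering the Kynn\"os tails by their horizontal $6$-periodicity. You instead shrink the window to the columns near the transition strips, by citing that the Kynn\"os tiling $Q$ is a still life and the marching band $M$ has period two; both facts do hold for the tiles given in the appendix.
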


\noindent\emph{Proof.} See Appendix~\ref{app:interfaces},
page~\pageref{proof:period}.

\begin{lemma}[Marker forcing]\label{lem:marker}
Let $x,w,y\in X$ with $\Life(w)=x$ and $\Life(y)=w$.  Then
\begin{equation}
 U\sqsubset x,\quad R\sqsubset w,\quad R\sqsubset y
 \quad\Longrightarrow\quad U\sqsubset y.       \label{eq:marker-force}
\end{equation}
\end{lemma}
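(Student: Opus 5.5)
This is a finite statement: the hypotheses and conclusion involve only finitely many cells of $x$, $w$, $y$. The plan is therefore to reduce it to a finite satisfiability problem and certify unsatisfiability, as the table indicates.

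\textbf{Step 1: localize.} Let $D=\operatorname{dom}(U)$. Each cell of $U\sqsubset x=\Life^2(y)$ depends only on $y$ in the radius-two neighbourhood $D^{(2)}$ of $D$. The intermediate values $w|_{D^{(1)}}$ are themselves determined by $y|_{D^{(2)}}$. So the statement is equivalent to the following finite claim. For every assignment $y\colon D^{(2)}\to\{0,1\}$ such that
\begin{itemize}
\item the induced $w=\Life(y)$ on $D^{(1)}$ agrees with $R$ wherever $\operatorname{dom}(R)\subseteq D^{(1)}$,
\item $y$ agrees with $R$ on $\operatorname{dom}(R)$,
\item the induced $x$ agrees with $U$ on $D$,
\end{itemize}
we have $y|_D=U$.

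The constraints on $w$ outside $D^{(1)}$ do not affect $x|_D$ or $y|_D$ directly. We can either drop them, which weakens the hypothesis and so yields a stronger claim, or include them by enlarging the window to the radius-one neighbourhood of $\operatorname{dom}(R)$. I would first try the window $D^{(2)}\cup\operatorname{dom}(R)^{(1)}$, keeping all hypotheses, since dropping constraints may make the claim false.

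\textbf{Step 2: encode and refute.} Introduce one Boolean variable per cell of $y$ in the window and auxiliary variables for $w$. Encode each Life transition by the standard cardinality clauses: exactly 3 live neighbours, or exactly 2 live neighbours with the cell alive. Add the unit clauses for $R$ in $y$ and $w$, and for $U$ in $x$. Then, for each cell $p\in D$, add the negated goal $y(p)\neq U(p)$ and check that the resulting formula is unsatisfiable. It is cheaper to use a single disjunction $\bigvee_{p\in D} y(p)\neq U(p)$ and ask for one UNSAT certificate, with a DRAT proof checked independently.

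As a cross-check, I would use the finite-state route. Exploit the vertical structure and sweep row by row, carrying the set of feasible local states of $(y,w)$ over a three-row window. Then confirm that every surviving global assignment restricts to $U$ on $D$.

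\textbf{Main obstacle.} The difficulty is that $U$ is large, spanning about 48 rows and many columns, so a single SAT instance may be slow. I would first split $D$ into blocks and prove the forcing incrementally. The rail $R$ in both $y$ and $w$ already fixes the Kynn\"os regions, by the same local rigidity as Fact~\ref{fact:kynnos}. Propagation then pins the transition strips and finally the channel. Each step would be a separate small UNSAT instance whose hypotheses include the previously forced cells. Correctness of the combined argument then rests on the trivial transitivity of implication. If some block fails, I would enlarge the window rather than the hypotheses, since the statement is fixed.
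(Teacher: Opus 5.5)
Your proposal is correct and is essentially the paper's SAT proof. The paper encodes exactly your layered window, namely $D_0=\operatorname{dom}(U)$, $D_1=D_0+[-1,1]^2$ and $D_2=D_0+[-2,2]^2$, with premises $U@0$, $R@1$, $R@2$ and a single negated-target clause, and certifies unsatisfiability with DRAT/LRAT checkers. Its independent finite-state route does the staged, block-by-block restoration you describe as a fallback, though it proceeds from the centre outward rather than from the Kynn\"os regions inward. One small correction: the window enlargement you consider is vacuous, since $R\sqsubset U$ gives $\operatorname{dom}(R)\subseteq D$, so $D^{(2)}$ already carries every hypothesis.
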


\noindent\emph{Proof.} See Appendix~\ref{app:finite-state},
page~\pageref{proof:marker}.  Appendix~\ref{app:clausal},
page~\pageref{proof:sat}, gives an independent SAT verification.

\begin{lemma}[Lane pump]\label{lem:pump}
Let $x_0,x_1,x_2\in X$ with $h(x_1)=x_0$ and $h(x_2)=x_1$.  Then
\begin{equation}
 S_+\sqsubset x_0,\ S_+\sqsubset x_1,\ S_+\sqsubset x_2
 \quad\Longrightarrow\quad
 \shift_qS_+\sqsubset x_2.                             \label{eq:single-pump}
\end{equation}
\end{lemma}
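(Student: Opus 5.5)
This is a finite statement about the Life rule, so I would prove it as a finite constraint-satisfaction problem. Only a bounded window influences the conclusion. Since $h=\Life^2$ has radius $2$, the values of $x_1$ on $\operatorname{dom}(S_+)=[9,22]\times[28,31]$ depend only on $x_2$ on $[7,24]\times[26,33]$. Likewise, the values of $x_0$ on that domain depend on $x_2$ through $\Life^4$, which has radius $4$, so they depend only on $x_2$ on $[5,26]\times[24,35]$. The target $\shift_qS_+$ is supported on $[9,22]\times[32,35]$, which lies inside this window.

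The plan is to encode everything in terms of the unknown restriction $z=x_2|_{[5,26]\times[24,35]}$ alone. The intermediate layers are not free variables: they are determined by $z$ on the appropriate shrinking windows, namely $w_1=\Life(x_2)$, then $x_1=\Life(w_1)$, $w_0=\Life(x_1)$ and $x_0=\Life(w_0)$. The statement then becomes the claim that the following is unsatisfiable: $z$ agrees with $S_+$ on its domain, $x_1$ computed from $z$ agrees with $S_+$, $x_0$ computed from $z$ agrees with $S_+$, and $z$ disagrees with $\shift_qS_+$ at some cell. This is a finite universal statement, so it holds for arbitrary $x_0,x_1,x_2\in X$ and not merely near $c_0$.

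To verify it, I would encode the Life rule cellwise as CNF (cardinality constraints on the Moore neighbourhood), with auxiliary variables for the four intermediate generations. I would run one SAT query for each target cell $p\in\operatorname{dom}(\shift_qS_+)$, asserting $z(p)\neq(\shift_qS_+)(p)$, and expect an UNSAT certificate (DRAT) in each case. As an independent check, I would run a column-by-column transfer/finite-state sweep over the strip of width $22$. This sweep propagates the set of reachable local states of the four stacked layers row by row and confirms that no accepting path violates the target. The $\Life^4$ dependency is exactly what makes the window large. For this reason I would build the sweep on pairs of consecutive layers, which keeps the state space manageable.

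The main obstacle is the size of the search. There are roughly $22\times12$ free bits in $z$ plus four intermediate layers, and the conclusion must hold at the edges of the window, where boundary cells of $z$ are unconstrained. If the direct SAT calls are too hard, I would first add the trivially derivable consequences: $S_+$ in $x_2$ fixes part of $z$ outright, and one-step backward propagation from $x_1$ and $x_0$ prunes cells. I would then split on the unknown rows $[32,35]$ of $z$. Once this step succeeds, Lemma~\ref{lem:pump} follows directly, and the $S_-$ case is not needed separately: it follows by applying $\psi$, since $\psi$ commutes with $h$ and $\psi\shift_q=\shift_{-q}\psi$.
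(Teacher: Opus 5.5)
Your proposal is correct and follows the paper's SAT route. The paper likewise encodes the depth-four history $Z_4\to\cdots\to Z_0$ with $x_j=Z_{2j}$, premises $S_+$ in $Z_0,Z_2,Z_4$, and a negated target $\shift_qS_+$ in $Z_4$, certifies unsatisfiability with DRAT/LRAT proofs, and runs a boundary-transfer sweep as an independent check. The remaining differences are cosmetic: the paper's oldest layer $[5,26]\times[24,39]$ is four rows taller than your exact cone $[5,26]\times[24,35]$, but the extra rows feed only unconstrained cells, so the formulas are equisatisfiable; and its finite-state route passes through three candidate patterns on rows $32,33$ of $x_1$ rather than a single sweep.
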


\noindent\emph{Proof.} See Appendix~\ref{app:finite-state},
page~\pageref{proof:pump}.  Appendix~\ref{app:clausal},
page~\pageref{proof:sat}, gives an independent SAT verification.

\begin{lemma}[Persistence]\label{lem:persistence}
If $x,y\in X$, $h(y)=x$, and $U\sqsubset x$, then $U\sqsubset y$.
\end{lemma}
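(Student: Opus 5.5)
We would prove Lemma~\ref{lem:persistence} by decomposing the two-step predecessor relation into its intermediate layer and feeding the rail through Fact~\ref{fact:kynnos} twice. Given $h(y)=x$, set $w=\Life(y)$, so that $\Life(w)=x$ and $\Life(y)=w$. This is exactly the shape of the hypotheses of Lemma~\ref{lem:marker}.

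It remains to supply the two rail premises $R\sqsubset w$ and $R\sqsubset y$. Recall that $R$ is the union of four translates $\shift_{v_1}K,\dots,\shift_{v_4}K$, each contained in $U$. Since $U\sqsubset x$, each $\shift_{v_i}K\sqsubset x$.

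First apply Fact~\ref{fact:kynnos} to the pair $\Life(w)=x$ to get $\shift_{v_i}K\sqsubset w$ for each $i$, hence $R\sqsubset w$. Then apply Fact~\ref{fact:kynnos} again to the pair $\Life(y)=w$ to get $\shift_{v_i}K\sqsubset y$, hence $R\sqsubset y$. A union of subpatterns of a configuration is again a subpattern, so no consistency issue arises.

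With $U\sqsubset x$, $R\sqsubset w$ and $R\sqsubset y$ in hand, Lemma~\ref{lem:marker} yields $U\sqsubset y$, as claimed.

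There is no real obstacle here; the substantive content sits in the computer-verified Lemma~\ref{lem:marker}. The only points to check are bookkeeping ones. The rail must literally be a union of translates of $K$ lying inside $U$ at the coordinates used in Lemma~\ref{lem:marker}, which is how $R$ was defined. Fact~\ref{fact:kynnos} holds for arbitrary configurations in $X$, not only those in $\Om$, so it applies to the arbitrary predecessors $w$ and $y$.
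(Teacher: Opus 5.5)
Your proposal is correct and matches the paper's proof: set $w=\Life(y)$, apply Fact~\ref{fact:kynnos} at each of the four translates of $K$ in the rail, first from $x$ to $w$ and then from $w$ to $y$, to get $R\sqsubset w$ and $R\sqsubset y$, and conclude $U\sqsubset y$ by Lemma~\ref{lem:marker}. Your bookkeeping remarks, that $R\sqsubset U$ and that the Fact holds for arbitrary configurations in $X$, are exactly the points the paper's argument relies on.
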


\begin{proof}
Put $w=\Life(y)$, so $\Life(w)=x$.  Apply Fact~\ref{fact:kynnos} at each of
the four translates of $K$ in $U$, first from $x$ to $w$ and then from $w$
to $y$, keeping each translation vector fixed.  This shows that $R\sqsubset w$ and $R\sqsubset y$.  Now
Lemma~\ref{lem:marker} gives $U\sqsubset y$.
\end{proof}

It remains to record that the lane cannot coexist with its quarter-turn.  A
translated quarter-turn $\shift_{(a,b)}\rot B$ assigns at $(x,y)$ the value
that $B$ assigns at $(b-y,x-a)$; its horizontal support therefore meets the
vertical support of $B$ in the square
\begin{equation}
  \{15,16,17\}\times\{b-17,b-16,b-15\}.       \label{eq:crossing-square}
\end{equation}

\begin{lemma}[Crossing]\label{lem:crossing}
For every $(a,b)\in\Z^2$, the two lane patterns $B$ and
$\shift_{(a,b)}\rot B$ disagree at a cell of the square
\eqref{eq:crossing-square}.
\end{lemma}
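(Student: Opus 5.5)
The plan is to reduce the statement to a finite check over residues modulo $4$ and then dispose of it with a short case analysis. Fix $(a,b)$ and index the cells of the square \eqref{eq:crossing-square} as $(15+i,\,b-15-j)$ with $i,j\in\{0,1,2\}$. By \eqref{eq:lane-definition}, $B$ assigns to this cell the value $L_{(t-j)\bmod 4}(i)$, where $t=b-15$. By the formula preceding the lemma, $\shift_{(a,b)}\rot B$ assigns the value of $B$ at $(b-y,x-a)=(15+j,\,15+i-a)$, which is $L_{(s+i)\bmod 4}(j)$, where $s=15-a$. So the lemma is equivalent to the claim that for every pair $s,t\in\Z/4$ there are $i,j\in\{0,1,2\}$ with
\[
 L_{t-j}(i)\neq L_{s+i}(j).
\]
This is a finite statement with $16$ cases. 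I would not enumerate them blindly; instead I would use the structure of the words in \eqref{eq:lane-words}.

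The key observation is that the middle letter of every word $L_r$ is $0$. Thus in the $B$-window the middle column $i=1$ is entirely dead, and in the rotated window the middle row $j=1$ is entirely dead. I would exploit these dead lines to handle the cases as follows.

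First, compare the two windows at the cell $(i,j)=(2,1)$. The rotated value there is $L_{s+2}(1)=0$, while the $B$-value is $L_{t-1}(2)$. Since the last letters of the words are $0,1,1,1$, this $B$-value equals $1$ unless $t-1\equiv0$. So every case with $t\not\equiv1$ already disagrees at this cell.

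Second, compare at the cells with $i=1$. Here the $B$-values are all $0$, while the rotated values spell out the word $L_{s+1}$. This word is nonzero unless $s+1\equiv0$. So every case with $s\not\equiv3$ disagrees somewhere on this column.

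The only remaining case is $t\equiv1$ and $s\equiv3$, which I check by hand. The $B$-window rows $j=0,1,2$ are $L_1,L_0,L_3=001,000,101$. The rotated value at $(i,j)=(0,0)$ is $L_{s}(0)=L_3(0)=1$, while the $B$-value there is $L_1(0)=0$, so they disagree.

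The main obstacle is only bookkeeping: making sure the coordinate transformation for $\shift_{(a,b)}\rot$ and the index directions ($t-j$ versus $s+i$) are exactly right. A sign slip there would change which residues fall into the leftover case. I would therefore double-check the leftover case against an explicit picture of the two $3\times3$ windows before finalizing.
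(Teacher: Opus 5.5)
Your proof is correct and is essentially the paper's argument. In both, the dead middle line of each lane forces the other lane's word at the crossing to be $L_0$, which pins both residues; your $t\equiv1$, $s\equiv3$ is the paper's $r\equiv s\equiv3$ under reversed row indexing. The leftover case is then settled at the same cell $(15,b-15)$ that the paper uses.

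The differences are cosmetic: you index rows downward, phrase the argument as a case split rather than a contradiction, and in the first step use only the last letter of the dead middle row instead of the whole row.
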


\begin{proof}
Choose $r,s\in\{0,1,2,3\}$ with
\[
  r\equiv b-17\pmod4,
  \qquad s\equiv15-a\pmod4.
\]
At
$(x,y)=(15+i,b-17+j)$, for $0\leq i,j<3$, the vertical and horizontal
lanes induce two $3\times3$ blocks $C^{\mathrm v}$ and $C^{\mathrm h}$ on
the crossing square, with row index $j$ and column index $i$.  Their entries
are
\begin{equation}
 C^{\mathrm v}_{j,i}=L_{r+j}(i),
 \qquad C^{\mathrm h}_{j,i}=L_{s+i}(2-j),                   \label{eq:crossing-values}
\end{equation}
with subscripts modulo four.

Suppose $C^{\mathrm v}=C^{\mathrm h}$.  Every $L_k(1)$ is zero, so the
middle row of $C^{\mathrm h}$ is zero.  Equality therefore gives
$L_{r+1}=000$.  Since $L_0$ is the only all-zero word among
$L_0,\ldots,L_3$, we have $r\equiv3\pmod4$.  Likewise, the middle column of
$C^{\mathrm v}$ is zero, while the middle column of $C^{\mathrm h}$ is
$L_{s+1}$ in reverse order.  Thus $L_{s+1}=000$ and
$s\equiv3\pmod4$.  But at the cell with $(i,j)=(0,2)$,
\eqref{eq:crossing-values} gives $C^{\mathrm v}_{2,0}=L_1(0)=0$ and $C^{\mathrm h}_{2,0}=L_3(0)=1$, a
contradiction.
\end{proof}

\begin{proof}[Proof of Lemma~\ref{lem:separating}]
\phantomsection\label{proof:separating}
Lemma~\ref{lem:period}, together with $U\sqsubset c_0$, shows that $U$ occurs
in $\Om$.  Lemma~\ref{lem:persistence} gives the persistence hypothesis
$U\sqsubset h(y)\Rightarrow U\sqsubset y$ of
Theorem~\ref{thm:pumping}.  Lemma~\ref{lem:pump} gives its pumping hypothesis
for $S_+$ with $H=h$, $\ell=2$, and $q=(0,4)$.  Conjugating that implication
by $\psi$ gives the corresponding hypothesis for $S_-$ with displacement
$-q$.  Theorem~\ref{thm:pumping} therefore supplies a uniform ancestral
signature $W$ of $U$.

The pump states $S_+$ and $S_-$ provide complete lane periods on rows
$28,\ldots,31$ and $-12,\ldots,-9$, respectively, while $U$ contains the
intervening segment.  Repeated pumping therefore fills the entire
bi-infinite lane, so $B\sqsubset W$.  Finally,
Lemma~\ref{lem:crossing} says that $B$ is incompatible with
$\shift_v\rot B$ for every $v\in\Z^2$.  Thus $U$, $W$, and $B$ have all the
properties asserted in Lemma~\ref{lem:separating}.
\end{proof}

The same construction quantifies the separation at finite depth.  Let
\[
  U_{\mathrm v}=\shift_{(-16,-9)}U,
  \qquad
  U_{\mathrm h}=\rot U_{\mathrm v},
\]
so that the lane of $U_{\mathrm v}$ is vertical on columns $\{-1,0,1\}$ and
that of $U_{\mathrm h}$ is horizontal on rows $\{-1,0,1\}$.

\begin{corollary}[Quantitative separation]
\label{cor:quantitative}
For every $m\geq1$ and $v=(a,b)\in\Z^2$ with $\|v\|_\infty\leq4m+20$,
\[
  \cyl{U_{\mathrm v}}\cap\cyl{\shift_vU_{\mathrm h}}
  \cap\Life^{4m}(X)=\varnothing.
\]
\end{corollary}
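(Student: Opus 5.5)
The plan is to run the finite-depth part~(a) of Theorem~\ref{thm:pumping} directly, rather than pass through the limit set. Suppose, for contradiction, that $z\in\cyl{U_{\mathrm v}}\cap\cyl{\shift_vU_{\mathrm h}}\cap\Life^{4m}(X)$. Since $\Life^{4m}=h^{2m}$, there is $y\in X$ with $h^{2m}(y)=z$.

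The hypotheses \eqref{eq:abstract-pump} for $H=h$, $\ell=2$, $q=(0,4)$ hold for $U$. They come from Lemma~\ref{lem:persistence}, from Lemma~\ref{lem:pump}, and from its $\psi$-conjugate for $S_-$, exactly as in the proof of Lemma~\ref{lem:separating}. Since $h$ commutes with translations and with $\rot$, the transported hypotheses hold for $U_{\mathrm v}=\shift_{(-16,-9)}U$ and for $\shift_vU_{\mathrm h}=\shift_v\rot\shift_{(-16,-9)}U$. I will apply~(a) to each of these translated or rotated copies, with $n=2m=\ell m$. This gives $\shift_{(-16,-9)}W_m\sqsubset y$ and $\shift_v\rot\shift_{(-16,-9)}W_m\sqsubset y$ simultaneously.

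Next I determine which part of the lane $B$ lies in $W_m$. The marker $U$ supplies $B$ on rows $-14,\dots,33$. The translates $\shift_{kq}S_+$ for $k\le m$ supply complete lane periods up to row $31+4m$. The translates $\shift_{-kq}S_-$ supply periods down to row $-12-4m$. Hence $B$ restricted to rows $[-12-4m,\,31+4m]$ lies in $W_m$. After the shift by $(-16,-9)$, the vertical lane occupies columns $\{-1,0,1\}$ and rows $[-21-4m,\,22+4m]$. Applying $\rot$, using $(\rot x)(a',b')=x(-b',a')$, and then shifting by $v=(a,b)$, the horizontal lane occupies rows $b+\{-1,0,1\}$. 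Its columns form an interval around $a$ of radius at least $4m+21$ on each side, up to a unit asymmetry that I will check.

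Now assume $\|v\|_\infty\le 4m+20$. The crossing square is $\{-1,0,1\}\times\{b-1,b,b+1\}$. Its rows lie in $[-21-4m,22+4m]$ because $|b|\le4m+20$, and its columns lie in the horizontal segment because $|a|\le 4m+20$. So both finite lane segments are fully specified on that square. Lemma~\ref{lem:crossing} compares the bi-infinite $B$ and $\shift_{v'}\rot B$ cellwise only on the square, after conjugating by the translation $(-16,-9)$ to obtain the appropriate $v'$. It therefore yields a cell of the square where the two segments prescribe different values. Both segments are contained in $y$, which is a contradiction.

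The main obstacle is the coordinate bookkeeping: tracking the exact row range produced by $m$ pumps, and the image of that range under $\rot$ followed by translation. The margin $20$ must come out exactly, and one must check that the worst case among $a$ and $b$, with its unit asymmetry, still fits. Everything else is a direct invocation of earlier results.
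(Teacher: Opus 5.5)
Your proposal is correct and follows the paper's proof: you apply part~(a) of Theorem~\ref{thm:pumping} with $H=h$, $\ell=2$, $n=2m$ to both $U_{\mathrm v}$ and the transported copy $\shift_vU_{\mathrm h}$. This puts the segments $\{-1,0,1\}\times[-21-4m,22+4m]$ and $[a-21-4m,\,a+22+4m]\times\{b-1,b,b+1\}$ in one predecessor $y$, and Lemma~\ref{lem:crossing}, transported by translation, then gives a disagreement on $\{-1,0,1\}\times\{b-1,b,b+1\}$. The unit asymmetry you flag is harmless, since both interval inclusions hold whenever $|a|,|b|\leq4m+20$.
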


\begin{proof}
Let $x$ lie in the intersection, and choose $y\in X$ with $\Life^{4m}(y)=x$,
so that $h^{2m}(y)=x$.  Applying Theorem~\ref{thm:pumping} with $H=h$ and
$\ell=2$ to $U_{\mathrm v}\sqsubset x$ shows that $y$ contains
the vertical lane on
\[
  \{-1,0,1\}\times[-21-4m,\,22+4m].
\]
Indeed, after translating $U$ to $U_{\mathrm v}$, the lower and upper pump
states occupy rows $[-21,-18]$ and $[19,22]$, respectively; each pump step
extends the lane by four rows at each end.
Since $\rot$ and translations commute with $h$, the rotated and translated
marker satisfies the same persistence and pump hypotheses.  Applying the
theorem to the occurrence of $\shift_vU_{\mathrm h}$ in $x$ shows that
$y$ also contains the horizontal lane on
\[
  [a-21-4m,\,a+22+4m]\times\{b-1,b,b+1\}.
\]
If $\|v\|_\infty\leq4m+20$, then $a,b\in[-20-4m,20+4m]$, so the second
segment contains the columns $\{-1,0,1\}$ and the first contains the rows
$\{b-1,b,b+1\}$.  The two segments therefore contain the square
$\{-1,0,1\}\times\{b-1,b,b+1\}$.  Since $U_{\mathrm h}=\rot U_{\mathrm v}$, any translate of the lane of
$U_{\mathrm v}$ and any
translate of its quarter-turn are incompatible by Lemma~\ref{lem:crossing};
so the two lane segments disagree somewhere in that square.  Both are
restrictions of the single configuration $y$, a contradiction.
\end{proof}

\section{Discussion}
\label{sec:discussion}

Corollary~\ref{cor:quantitative} makes the finite-to-infinite transition
quantitative: increasing the available predecessor depth by four Life
generations adds one four-row period to each end of the forced lane.  Passing
to configurations with predecessors of every depth turns these expanding
finite exclusions into exclusion at every displacement.  The certificates
establish fixed local implications, iteration supplies the unbounded spatial
scale, and compactness realizes the resulting infinite signature in $\Om$.

The conclusion is a failure of transitivity, not a complete decomposition of
$\Om$.  The two disjoint orbit-saturated open sets constructed in
Proposition~\ref{prop:ancestral-obstruction} do not partition the limit set;
for example, the all-zero configuration belongs to neither.  Nor does the
ancestral-pumping criterion classify the cellular automata to which it
applies.  Its present application relies on three Life-specific ingredients:
a realizable periodic background, a finite marker restored throughout a
backward history, and pump states whose forced signatures are incompatible
after rotation.

Several questions therefore remain open.  The marker is explicit but large,
and we do not know the smallest possible separating pair.  It would also be
interesting to determine which cellular automata admit persistent periodic
backgrounds and pumps of this kind, and whether other mechanisms can produce
fixed separating pairs without periodic lanes.

\paragraph{Author contributions.}
All authors contributed equally to this work and are listed in alphabetical
order.  Ziran Li is the corresponding author.

\paragraph{Funding.}
This research received no external funding.

\paragraph{Competing interests.}
The authors declare none.

\paragraph{Data availability.}
The finite input patterns and proof certificates are archived at
\url{https://doi.org/10.5281/zenodo.23003844}.
The accompanying verification package contains the reconstruction scripts,
finite-state calculations, certificate checkers, and a file manifest.
Appendix~\ref{app:certificates} gives the verification command.

\section*{Acknowledgments}

\paragraph{AI assistance.}
The authors used OpenAI's GPT models, primarily GPT-5.6, to assist with
mathematical exploration, proof development, software implementation, and
manuscript preparation, including drafting and revision. The authors
reviewed and revised the mathematical arguments and the text. Computational
checks were performed by running the verification scripts and certificate
checkers described in Appendix~\ref{app:certificates}. The authors take
responsibility for the arguments, computations, and final text.

\appendix
\small
\input{certificate_appendix.tex}

{\scriptsize
\bibliographystyle{unsrt}
\bibliography{references}
}

\end{document}

%% file: generated/lane-seeds.tex
\begin{tikzpicture}
  \begin{scope}[x=0.22cm,y=0.22cm,xshift=0cm]
    \path[fill=fixedcell,draw=none]
  (-0.48,-0.48) rectangle (13.48,0.48)
  (-0.48,0.52) rectangle (13.48,1.48)
  (-0.48,1.52) rectangle (13.48,2.48)
  (-0.48,2.52) rectangle (13.48,3.48);
    \path[fill=livecell,draw=none]
  (1.52,-0.48) rectangle (2.48,0.48)
  (3.52,-0.48) rectangle (4.48,0.48)
  (9.52,-0.48) rectangle (10.48,0.48)
  (11.52,-0.48) rectangle (12.48,0.48)
  (-0.48,0.52) rectangle (1.48,1.48)
  (3.52,0.52) rectangle (5.48,1.48)
  (7.52,0.52) rectangle (9.48,1.48)
  (11.52,0.52) rectangle (13.48,1.48)
  (-0.48,1.52) rectangle (1.48,2.48)
  (3.52,1.52) rectangle (5.48,2.48)
  (7.52,1.52) rectangle (9.48,2.48)
  (11.52,1.52) rectangle (13.48,2.48)
  (-0.48,2.52) rectangle (0.48,3.48)
  (5.52,2.52) rectangle (6.48,3.48)
  (7.52,2.52) rectangle (8.48,3.48);
    \draw[markerframe] (-0.55,-0.55) rectangle (13.55,3.55);
    \draw[laneframe] (5.50,-0.50) rectangle
      (8.50,3.50);
    \node[figurelabel,anchor=south] at (6.50,4.6)
      {$S_+$};
  \end{scope}
  \begin{scope}[x=0.22cm,y=0.22cm,xshift=4.5cm]
    \path[fill=fixedcell,draw=none]
  (-0.48,-0.48) rectangle (13.48,0.48)
  (-0.48,0.52) rectangle (13.48,1.48)
  (-0.48,1.52) rectangle (13.48,2.48)
  (-0.48,2.52) rectangle (13.48,3.48);
    \path[fill=livecell,draw=none]
  (-0.48,-0.48) rectangle (0.48,0.48)
  (5.52,-0.48) rectangle (6.48,0.48)
  (7.52,-0.48) rectangle (8.48,0.48)
  (-0.48,0.52) rectangle (1.48,1.48)
  (3.52,0.52) rectangle (5.48,1.48)
  (7.52,0.52) rectangle (9.48,1.48)
  (11.52,0.52) rectangle (13.48,1.48)
  (-0.48,1.52) rectangle (1.48,2.48)
  (3.52,1.52) rectangle (5.48,2.48)
  (7.52,1.52) rectangle (9.48,2.48)
  (11.52,1.52) rectangle (13.48,2.48)
  (1.52,2.52) rectangle (2.48,3.48)
  (3.52,2.52) rectangle (4.48,3.48)
  (9.52,2.52) rectangle (10.48,3.48)
  (11.52,2.52) rectangle (12.48,3.48);
    \draw[markerframe] (-0.55,-0.55) rectangle (13.55,3.55);
    \draw[laneframe] (1.50,-0.50) rectangle
      (4.50,3.50);
    \node[figurelabel,anchor=south] at (6.50,4.6)
      {$S_-$};
  \end{scope}
\end{tikzpicture}

%% file: certificate_appendix.tex
\section{Verification of the finite statements}
\label{app:certificates}

This appendix records the finite data and coordinates used in the
verification (Section~\ref{app:interfaces}), the finite-state verification of
marker forcing and lane pumping (Section~\ref{app:finite-state}), and the
SAT verification of those implications together with the Kynn\"os
self-forcing implication (Section~\ref{app:clausal}).  The finite-state and
SAT calculations are two separate verification routes based on the same
finite input patterns.  Their independence is therefore at the level of
encodings and checkers, not at the level of input data.  The verification
reported here depends on the explicit finite implications,
input data, and certificates reconstructed and checked below.

After the setup described in the verification package README, run \verb|python3 verify.py --all| from the artifact
root.  This command reconstructs the formulas, checks the certificates, and
recomputes the finite-state proofs.  Data filenames below are relative to
\path{verification/results/}.  This is computational verification, not a
proof-assistant formalization of the whole argument.

\subsection{Finite patterns}
\label{app:interfaces}

\paragraph{The periodic witness.}
The following row words specify the Kynn\"os tile $Q$, the marching-band
tile $M$, and the two transition strips $A_{\mathrm L},A_{\mathrm R}$ of
$c_0$.  Rows and columns are indexed from zero; digits in each word run
from left to right.
The fundamental vertical periods are $6$ for $Q$, $4$ for $M$, and $12$ for
$A_{\mathrm L},A_{\mathrm R}$; blank entries indicate that no additional row
word is needed beyond the corresponding fundamental period.
\begin{center}
\begin{tabular}{@{}rcccc@{}}
\toprule
row & $Q$ & $M$ & $A_{\mathrm L}$ & $A_{\mathrm R}$\\
\midrule
0 & \texttt{001101} & \texttt{10000010} & \texttt{010001} & \texttt{101010}\\
1 & \texttt{001011} & \texttt{11001100} & \texttt{010101} & \texttt{110000}\\
2 & \texttt{110000} & \texttt{11001100} & \texttt{101101} & \texttt{110111}\\
3 & \texttt{010110} & \texttt{00101000} & \texttt{100000} & \texttt{001001}\\
4 & \texttt{100110} & & \texttt{100001} & \texttt{101001}\\
5 & \texttt{110000} & & \texttt{101101} & \texttt{110110}\\
6 & & & \texttt{010101} & \texttt{110010}\\
7 & & & \texttt{010000} & \texttt{001010}\\
8 & & & \texttt{001001} & \texttt{101101}\\
9 & & & \texttt{110101} & \texttt{110001}\\
10 & & & \texttt{100101} & \texttt{110001}\\
11 & & & \texttt{101000} & \texttt{001101}\\
\bottomrule
\end{tabular}
\end{center}
With the first subscript denoting the row, define
\[
c_0(x,y)=
\begin{cases}
Q_{y\bmod6,\,x\bmod6}, & x<0,\\
(A_{\mathrm L})_{y\bmod12,\,x}, & 0\leq x<6,\\
M_{y\bmod4,\,(x-5)\bmod8}, & 6\leq x<53,\\
(A_{\mathrm R})_{y\bmod12,\,x-53}, & 53\leq x<59,\\
Q_{y\bmod6,\,(x-57)\bmod6}, & x\geq59.
\end{cases}
\]
\begin{proof}[Proof of Lemma~\ref{lem:period}]
\phantomsection\label{proof:period}
Set $c_1=\Life(c_0)$.  Direct evaluation gives
$\Life^2(c_0)(x,y)=c_0(x,y)$ at all $900$ cells of
$[-8,66]\times[0,11]$, and $c_1\neq c_0$.
Twelve rows suffice because $c_0$ is vertically $12$-periodic.  Since
$\Life^2$ has radius two, the transition and splice region is fully covered
by the central checked columns $-2,\ldots,60$.  Outside this region the two
Kynn\"os tails are horizontally $6$-periodic, and the six representative
columns $-8,\ldots,-3$ and $61,\ldots,66$ cover those tails together with
their radius-two dependencies.  Thus $\Life(c_1)=c_0$ everywhere.  The
inequality $c_1\neq c_0$ shows that the orbit has exact period two.
\end{proof}

\paragraph{The Kynn\"os patch.}
Let $E_K$ consist of the $32$ cells listed below, and set
\[
 D_K=([2,29]\times[2,23])\setminus E_K,
 \qquad K(x,y)=Q_{y\bmod6,\,x\bmod6}\quad((x,y)\in D_K).
\]
\begin{center}
\begin{tabular}{@{}rl@{\qquad}rl@{}}
\toprule
$y$ & deleted $x$ & $y$ & deleted $x$\\
\midrule
2 & $2,3,4,5,6,27,28,29$ & 19 & $2$\\
3 & $2,3,28,29$ & 20 & $2$\\
4 & $2,29$ & 21 & $2,29$\\
5 & $29$ & 22 & $2,3,28,29$\\
6 & $29$ & 23 & $2,3,4,25,26,27,28,29$\\
\bottomrule
\end{tabular}
\end{center}
This $584$-cell pattern is the spatially phased Kynn\"os patch to which
\cite[Lemma~21]{salo-torma2022oracles} applies.  The SAT proof below checks
its self-forcing property at the same position in the predecessor and output.

\paragraph{The marker and rail.}
Let $U_{\mathrm{raw}}$ be the pattern in the field
\texttt{surviving\_shared\_pattern} of
\path{marker-source.json}.  Define
\[
\begin{split}
 E_7={}&\{(54,-2),(57,-2),(58,-2),(54,1),(55,1),(54,0),(55,-1)\},\\
\widehat U={}&U_{\mathrm{raw}}|_{\operatorname{dom}(U_{\mathrm{raw}})\setminus E_7},\\
U_{\mathrm b}={}&\widehat U|_{y\leq13},\qquad
U_{\mathrm t}=\shift_{(0,-24)}(\widehat U|_{y\geq38}),\qquad
U=U_{\mathrm b}\cup U_{\mathrm t}.
\end{split}
\]
We call $U_{\mathrm b}$ and $U_{\mathrm t}$ the bottom and top caps,
respectively.  The displayed reconstruction defines $U$.  The named JSON
field is the machine-readable representation used by the verification scripts, and the
artifact manifest identifies its source file by the SHA-256 digest
\begin{center}
\footnotesize\ttfamily
72dc522f1516ef73682a4c7d28876ff1\\
a0f24f688754899183f315312933eb89
\end{center}
and the top-level verification command reconstructs the pattern before
checking the marker and pump implications.
Of the seven omitted cells, four distinguish the predecessor branches used
in the construction, while three remain free under the local constraints.
After these seven cells are removed, the remaining prescribed cells are
common to all predecessor branches retained in the construction.
The raw pattern has $6{,}726$ cells and $\widehat U$ has $6{,}719$.
The two cap domains are disjoint, with $2{,}456$ and $1{,}503$ cells.
Their union has $3{,}959$ cells and bounding box
$[-28,86]\times[-16,35]$.  All its values agree with $c_0$;
translation by $(0,-24)$ preserves the vertical phase.

The rail is
\[
 T=\{(-30,-6),(-30,0),(57,-6),(57,0)\},\qquad
 R=\bigcup_{v\in T}\shift_vK.
\]
The four bounding boxes are
\[
\begin{array}{ll}
[-28,-1]\times[-4,17], & [-28,-1]\times[2,23],\\[2pt]
[59,86]\times[-4,17], & [59,86]\times[2,23].
\end{array}
\]
The translates agree on overlaps, and their union is a $1{,}504$-cell
subpattern of $U$.

\paragraph{The pump states and coordinate conversion.}
The states $S_+$ and $S_-$ are the restrictions of $c_0$ to
$[9,22]\times[28,31]$ and $[13,26]\times[-12,-9]$, respectively.
Each has $56$ cells, is a subpattern of $U$, and contains four complete rows
of the width-three lane, with $S_-=\shift_{(4,19)}\refl S_+$.
The certificate coordinates are related to the paper coordinates by
\[
 (x,y)_{\mathrm{certificate}}\longmapsto(x,y-28)_{\mathrm{paper}}.
\]
Their seed rows $56,\ldots,59$ therefore become $28,\ldots,31$.
Under this coordinate conversion, the verified upper-pump implication is
exactly \eqref{eq:single-pump}; reflection then gives the lower displacement
$(0,-4)$.

\subsection{Finite-state verification}
\label{app:finite-state}

\paragraph{The method and time convention.}
The finite-state calculation decides whether prescribed finite regions in
successive Life layers can be extended consistently with all local Life
equations while a vertical cut is swept across the spacetime patch.
For the generic transfer write
$V_2\xrightarrow{\Life}V_1\xrightarrow{\Life}V_0$;
each arrow is one Life generation.  Place a vertical cut between columns
$a,a+1$.  The boundary symbol at row $y$ is
\[
 \bigl(V_1(a,y),V_1(a+1,y),V_2(a,y),V_2(a+1,y)\bigr).
\]
Two adjacent columns from each unknown predecessor layer are carried across
the cut because this is the information needed to evaluate the local Life
equations when the next column is exposed.
The symbols are read from bottom to top, and the cut moves from right to
left.  Figure~\ref{fig:boundary-transfer} shows one move.
The output $V_0$ is prescribed on the checked domain; the boundary contains
the two unknown predecessor layers.

A \emph{layered deterministic finite automaton} (DFA) reads one boundary
symbol per row and has one state layer per row.  Its accepted words are the
boundaries that admit extensions into the scanned region.
A \emph{monitor} tracks a property of cells as they pass the cut,
such as a flag for a target mismatch.  A single boundary word may admit
several interior extensions with different monitor outcomes; a terminal
state's \emph{colour} records the set of all such outcomes.

\begin{figure}[ht]
\centering
\begin{tikzpicture}[x=7mm,y=7mm,font=\small]
  \foreach \offset/\layer in {0/1,8/2} {
    \begin{scope}[xshift=\offset*7mm]
      \node at (1.5,4.05) {$V_{\layer}$};
      \foreach \i in {0,1,2} {
        \foreach \j in {0,1,2} {
          \draw[fill=fixedcell!45,draw=ink!30]
            (\i,\j) rectangle +(1,1);
        }
      }
      \fill[lanepale] (0,1) rectangle (2,2);
      \fill[cappale] (2,1) rectangle (3,2);
      \draw[laneblue,line width=1.1pt] (0,1) rectangle (2,2);
      \draw[caprust,densely dashed,line width=1pt]
        (1,0.92) rectangle (3,2.08);
      \foreach \i in {0,1,2} {
        \fill[ink] (\i+0.5,1.5) circle (1.5pt);
      }
      \draw[laneblue,densely dotted] (1,-0.15) -- (1,3.2);
      \draw[caprust,densely dotted] (2,-0.15) -- (2,3.2);
      \draw[-{Stealth[length=1.7mm]},ink] (2,3.4) -- (1,3.4);
      \node[below] at (0.5,-0.1) {$a-1$};
      \node[below] at (1.5,-0.1) {$a$};
      \node[below] at (2.5,-0.1) {$a+1$};
    \end{scope}
  }
  \node[left] at (-0.2,0.5) {$y-1$};
  \node[left] at (-0.2,1.5) {$y$};
  \node[left] at (-0.2,2.5) {$y+1$};
  \draw[-{Stealth[length=1.7mm]},muted] (-2.2,0.3) -- (-2.2,2.7);
  \node[font=\scriptsize\sffamily,text=laneblue,anchor=west]
    at (0,-1.05) {solid: new boundary};
  \node[font=\scriptsize\sffamily,text=caprust,anchor=west]
    at (6.3,-1.05) {dashed: old boundary};
\end{tikzpicture}
\caption{Moving the cut one column left.  At each row, the new boundary
has four bits in columns $a-1,a$ across $V_1,V_2$; the two bits in column
$a+1$ become hidden choices.  Symbols are read upwards.  Three successive
rows determine the Life equations at their middle row in column $a$,
with the prescribed output layer $V_0$ omitted from the drawing.}
\label{fig:boundary-transfer}
\end{figure}

\begin{proposition}[Coloured boundary transfer]\label{prop:colored-transfer}
Suppose each checked Life equation has its centre and all eight neighbours
in the corresponding input domain.  Then a layered DFA can represent
exactly the boundary words that extend into the scanned part of the finite
spacetime patch.  If a deterministic finite monitor reads cells as their
columns pass the cut, terminal colours give the set of monitor outcomes
of those extensions.
\end{proposition}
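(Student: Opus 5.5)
The plan is the standard transfer-matrix argument for a one-dimensional sweep, run column by column over the spacetime patch $V_2\to V_1\to V_0$. I would induct on the number of columns swept past the cut. The invariant is this: after the cut has moved to lie between columns $a$ and $a+1$, the layered DFA for the current cut accepts a boundary word
\[
 \beta=\bigl(V_1(a,y),V_1(a+1,y),V_2(a,y),V_2(a+1,y)\bigr)_y
\]
if and only if some assignment of the unknown cells of $V_1,V_2$ in columns $\ge a+1$ meets two conditions. It agrees with $\beta$ on the two boundary columns, and it satisfies every checked Life equation whose centre lies in a column $\ge a+1$. The base case is the rightmost cut, where no equations have yet been checked. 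There the DFA accepts every boundary word compatible with the prescribed cells of the input domains. This is clearly a finite layered automaton, with one layer per row and states recording the previous one or two symbols.

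For the inductive step (moving the cut from $a/a+1$ to $a-1/a$), I would build the new automaton by a product-and-projection construction. A state of the new layered DFA at row $y$ is a pair. The first component is a state of the old DFA. The second is a finite window holding the last two new boundary symbols together with the hidden column-$(a+1)$ bits of the last two rows. On reading the new symbol at row $y+1$, the automaton nondeterministically guesses the hidden bits $V_1(a+1,y+1)$ and $V_2(a+1,y+1)$. These guessed bits, together with the shared column-$a$ bits, form an old boundary symbol, which it feeds to the old DFA. It then checks the Life equations centred at $(a,y)$ for both transitions, $V_2\to V_1$ and $V_1\to V_0$. This check is possible because, by the hypothesis of the proposition, the whole $3\times3$ neighbourhood of $(a,y)$ lies in the relevant input domains. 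That neighbourhood occupies columns $a-1,a,a+1$ and rows $y-1,y,y+1$, all of which are now in the window. The $V_0$ values are prescribed constants. Cells outside the input domains impose no constraint, and cells prescribed in the domains are enforced when the bits are read. Determinizing this finite nondeterministic layered automaton by the subset construction, layer by layer, gives a DFA. Acceptance, meaning the old DFA accepts and every local check passes, is then exactly the existence of an extension. The forward direction is soundness of the checks. The converse holds because any genuine extension supplies the correct guesses.

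For the monitor clause, I would carry the monitor state inside the product. Each nondeterministic branch updates the monitor deterministically on the column-$(a+1)$ cells that become hidden. Before determinizing, I replace each state by a pair $(\text{state},\text{monitor outcome})$. The subset construction then yields subsets of pairs. The colour of a terminal state is defined as the set of monitor outcomes occurring in its accepting pairs. By the same bijection between accepting branches and extensions, this is exactly the set of outcomes realized by extensions of the given boundary word.

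The main obstacle is the bookkeeping at the induction step. I must show that each checked equation is verified exactly once, at the moment its centre column leaves the window, and never needs a cell that has already been forgotten. This is where the neighbourhood-containment hypothesis enters. I would state the window invariant carefully, keeping two rows of lookback and columns $a-1,a,a+1$, before doing anything else. The top and bottom boundary rows also need care, since there equations are simply not checked.
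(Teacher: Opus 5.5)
Your proposal is correct and follows the paper's proof essentially step for step. Both use induction over cut positions, product states pairing an old-DFA state with the two preceding six-bit rows (the paper's correlated tuples $(s,r_-,r)$), a nondeterministic choice of the two hidden column-$(a+1)$ bits removed by subset determinization, a check of both Life equations at the middle row of column $a$, and a monitor carried in the product with colours as unions of outcomes.

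Two small points to tidy. First, the free cells in your invariant are those in columns $\geq a+2$, since columns $a,a+1$ are fixed by $\beta$. Second, in each new column a branch's monitor must start from an outcome in the old DFA's terminal colour (guess it at the first row and check membership at termination); otherwise the new colour records only column $a+1$. The paper's proof is equally terse on this second point.
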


The transfer is therefore both sound and complete: it neither discards a
valid finite spacetime extension nor introduces a spurious one.

\begin{proof}
Initially only the fixed cells in the two rightmost boundary columns are
checked; no Life equation has passed the cut.  Maintain the following
invariant: an accepted boundary word has exactly the extensions permitted
by the scanned constraints, and its colour is the set of their monitor
outcomes.

To move the cut left, read the four new bits in columns $a-1,a$, and choose
the two hidden bits in column $a+1$.  These hidden choices introduce
nondeterminism, which is handled by the usual subset determinization.  During
this row scan a determinized state is therefore a set of correlated tuples
$(s,r_{-},r)$, where $s$ is a state of the DFA at the previous cut position
and $r_{-},r$ are the preceding two six-bit rows.  At the first rows, a
distinguished boundary marker is used when fewer than two preceding rows are
available.  Monitor information is
stored together with its boundary tuple in the product state, or encoded
in the coloured future language of the old state.

For each tuple and hidden choice, check all fixed values, advance the old
DFA on its induced boundary symbol, and, once three rows are available,
check the two Life equations at the middle row of column $a$ wherever those
equations belong to the patch.  Reject a branch that fails a check;
otherwise advance its monitor and keep the updated tuple.  The extreme rows
supply the complete radius-one input neighbourhood, consisting of the centre
and its eight Moore neighbours (the input halo), for the equations on the
interior rows.

Every extension supplies an accepted sequence of choices.  Conversely,
each accepted sequence satisfies the fixed values and every equation as it
passes the cut, and hence extends the new boundary.  Taking sets of
correlated tuples preserves both statements,
and unioning monitor outcomes at termination gives exactly the asserted
colour.  Merging states with identical coloured future languages preserves
these extensions and colours.  Induction over rows and columns proves the
invariant and the proposition.
\end{proof}

One-step transfers use the corresponding one-layer boundary construction.

\phantomsection\label{proof:pump}
\subsubsection*{Finite-state proof of Lemma~\ref{lem:pump}}
Write the four-generation history as
\[
 Z_4\xrightarrow{\Life}Z_3\xrightarrow{\Life}Z_2
 \xrightarrow{\Life}Z_1\xrightarrow{\Life}Z_0,
 \qquad x_j=Z_{2j}\quad(j=0,1,2).
\]
The middle layer in \eqref{eq:single-pump} is $Z_2$.  The first two-step
transfer uses
$(V_2,V_1,V_0)=(Z_2,Z_1,Z_0)$; the second uses $(Z_4,Z_3,Z_2)$.
The first transfer reduces the possible intermediate behaviour to three
candidate patterns on columns $10,\ldots,21$: the two nonreference candidates
$T_{\mathrm L},T_{\mathrm R}$ and the reference candidate
$T_{\mathrm{ref}}\sqsubset c_0$.  A dot means an unspecified cell:
\begin{center}
\begin{tabular}{@{}ccc@{}}
\toprule
pattern & row $32$ & row $33$\\
\midrule
$T_{\mathrm L}$ & \texttt{..0100......} & \texttt{..101.......}\\
$T_{\mathrm R}$ & \texttt{...1000.111.} & \texttt{............}\\
$T_{\mathrm{ref}}$ & \texttt{0101000...01} & \texttt{.0011...100.}\\
\bottomrule
\end{tabular}
\end{center}
Under the displayed Life equations, the boundary scans establish three
implications:
\begin{align*}
 S_+\sqsubset Z_0,\qquad S_+\sqsubset Z_2
 &\ \Longrightarrow\
 T_{\mathrm L}\sqsubset Z_2\ \text{or}\
 T_{\mathrm R}\sqsubset Z_2\ \text{or}\
 T_{\mathrm{ref}}\sqsubset Z_2,\\
 S_+\sqsubset Z_2,\qquad S_+\sqsubset Z_4
 &\ \Longrightarrow\
 T_{\mathrm L}\not\sqsubset Z_2\ \text{and}\
 T_{\mathrm R}\not\sqsubset Z_2,\\
 S_+\cup T_{\mathrm{ref}}\sqsubset Z_2,\quad S_+\sqsubset Z_4
 &\ \Longrightarrow\ \shift_qS_+\sqsubset Z_4.
\end{align*}
For the first implication, the monitor assigns failure bits $1,2,4$ to
$T_{\mathrm L},T_{\mathrm R},T_{\mathrm{ref}}$ respectively.  Its reachable
terminal masks are $\{3,4,5,6\}$; the all-failure mask $7$ is absent.  Since
$7=1+2+4$ would mean that all three candidates fail, its absence shows that
at least one candidate survives.  For the second implication, the prescribed
cells of $S_+\sqsubset Z_4$ locally determine $Z_3$ on
$[10,21]\times[29,30]$ to the two row words
\texttt{010000010100} and \texttt{000101000001}.
One-step scans with this band and $S_+\sqsubset Z_2$ exclude both
nonreference candidates.  In the last two-step transfer, a mismatch monitor
finds no extension disagreeing with any of the $56$ target cells.
Combining the three implications proves Lemma~\ref{lem:pump}.

\phantomsection\label{proof:marker}
\subsubsection*{Finite-state proof of Lemma~\ref{lem:marker}}
Use $Z_2\xrightarrow{\Life}Z_1\xrightarrow{\Life}Z_0$ with the common
premises $U\sqsubset Z_0$, $R\sqsubset Z_1$, and $R\sqsubset Z_2$.
Nine stages of boundary-automaton calculations prove
$U\sqsubset Z_2$.  Stages 4--9 reconstruct six disjoint geometric regions of
$U$, proceeding from the centre toward the outer parts of the marker.  Each
stage may contain many boundary transfers.
Set $D_{\mathrm c}=\{28,29\}\times[-2,12]$.  The two central
$30$-bit patterns are $Z_1|_{D_{\mathrm c}}$ and $Z_2|_{D_{\mathrm c}}$;
their reference values are $c_1|_{D_{\mathrm c}}$ and
$c_0|_{D_{\mathrm c}}$, respectively.
Table~\ref{tab:marker-stages} gives the dependencies and conclusions;
all target regions are in $Z_2$ unless otherwise indicated.

\begin{table}[ht]
\centering
\caption{The nine-stage marker restoration.  Premises are additional to the
common marker and rail premises.  Auxiliary supports in $Z_1$ are proved
before use.}
\label{tab:marker-stages}
\begin{tabularx}{\textwidth}{@{}r>{\raggedright\arraybackslash}X>{\raggedright\arraybackslash}X@{}}
\toprule
stage & inputs and prior stages & conclusion\\
\midrule
1 & Common premises & Seven boundary patterns inconsistent with the common premises are excluded.\\
2 & Stage 1
  & $Z_1|_{D_{\mathrm c}}=c_1|_{D_{\mathrm c}}$.\\
3 & Stage 2 & $Z_2|_{D_{\mathrm c}}=c_0|_{D_{\mathrm c}}$.\\
4 & Stages 2 and 3 & $766$ centre cells.\\
5 & Stage 4; derived $14$-cell support in $Z_1$ & $55$ right-fringe cells.\\
6 & Stage 5; derived $181$-cell support in $Z_1$ & $210$ shoulder cells.\\
7 & Stage 6; derived $189$-cell support in $Z_1$ & $361$ lower-slope cells.\\
8 & Stage 7 & $88$ bottom-tip cells.\\
9 & Stage 8; derived $605$-cell support in $Z_1$ & $975$ top-cap nonrail cells.\\
\bottomrule
\end{tabularx}
\end{table}

At each stage, overlapping cuts carry restrictions of the same global
layer.  Every additional premise follows from the common premises and
previous stages.  In particular, each auxiliary $Z_1$ support is first
derived from $Z_1\xrightarrow{\Life}Z_0$ and earlier supports, and is
then used in $Z_2\xrightarrow{\Life}Z_1$.  The artifact specifies the
domains, fixed values and boundary conditions for each stage.

The rail and the six target regions in stages 4--9 partition
$\operatorname{dom}(U)$:
\[
 1504+766+55+210+361+88+975=3959.
\]
The artifact's composition-check routine reconstructs these cell/value sets
and verifies their pairwise disjointness, their union and each stage's
premises.  This proves Lemma~\ref{lem:marker}.

\subsection{SAT proofs}
\label{app:clausal}
\phantomsection\label{proof:sat}

\subsubsection*{SAT proofs of Fact~\ref{fact:kynnos} and
Lemmas~\ref{lem:marker} and \ref{lem:pump}}

This subsection independently proves the three displayed implications by
showing that a finite SAT formula for a counterexample is unsatisfiable.  The
encoding and reconstruction argument is common to all three claims; their
individual domains, premises, conclusions, and checked certificates are
identified explicitly below.

\paragraph{Finite domains and global realizations.}
Write a depth-$d$ history as
$Z_d\xrightarrow{\Life}\cdots\xrightarrow{\Life}Z_0$.
Choose a finite $D_0$ and let
\[
 D_i=D_0+[-i,i]^2\quad(0\leq i\leq d),\qquad
 Z_{i-1}(p)=\Life(Z_i)(p)\quad(p\in D_{i-1},\ 1\leq i\leq d).
\]
All fixed premises lie in their respective $D_i$, and the target lies in
$D_d$.  For every equation centred at $p\in D_{i-1}$, the centre and all
eight input neighbours lie in $D_i$.  Thus every global Life history
restricts to a finite assignment satisfying the encoded local equations.
Conversely, every finite assignment satisfying all these complete local
equations extends to a global history: complete the oldest layer $D_d$
arbitrarily to the plane and iterate Life forward.  Induction down the
layers shows that the resulting history agrees with the finite assignment on
every $D_i$.

For each claim, encode these equations and fixed premises, and add a single
clause requiring at least one target cell to differ from its prescribed
value.  The resulting formula is satisfiable exactly when the stated
implication over arbitrary Life configurations has a counterexample.
Table~\ref{tab:finite-domains} lists
the domains and constraints.  For the pump, $S$ is the certificate-coordinate
translate of $S_+$ described in Section~\ref{app:interfaces}.  Its domain
$D_0$ includes the seed and target supports; the prescribed values are
exactly those in the fixed-premises column.

\begin{table}[ht]
\centering
\caption{Domains, premises, and desired conclusions for the three formulas.
Here $P@i$ means that $P$ is prescribed in $Z_i$; the SAT formula negates the
displayed desired conclusion.}
\label{tab:finite-domains}
\begin{tabular}{@{}lclll@{}}
\toprule
claim & $d$ & $|D_0|,\ldots,|D_d|$ & fixed premises & desired conclusion\\
\midrule
Kynn\"os & 1 & $584,688$ & $K@0$ & $K@1$\\
marker & 2 & $3959,4396,4776$ & $U@0;\ R@1,\ R@2$ & $U@2$\\
pump & 4 & $112,160,216,280,352$ & $S@0,\ S@2,\ S@4$ & $\shift_{(0,4)}S@4$\\
\bottomrule
\end{tabular}
\end{table}

\paragraph{Encoding and reconstruction.}
The primary formulas use a totalizer encoding \cite{bailleux2003totalizer}
of the local Life rule.  A binary-adder encoding provides a structurally
different local semantics cross-check.  To check the semantics with auxiliary
variables, for every
assignment to the centre and its eight neighbours fix the correct output
and require satisfiability over the auxiliary variables, then fix the
opposite output and require unsatisfiability.  The verification artifact
performs these tests for both encodings and for all four coordinate-parity
classes distinguished by the implementation:
$512\cdot2\cdot4=4096$ inputs and $8192$ satisfiability queries.  All pass.

The shared auxiliary variables represent threshold predicates for fixed
sets of input cells.  Each merge is defined by clauses in both directions,
and a threshold variable is shared only for the same set of input cells.
Thus any assignment satisfying the Life equations extends consistently to the
auxiliaries by assigning each threshold variable the truth value determined
by the actual neighbour count.  Conversely, induction through the merge gates
fixes every threshold used by the encoding to its uniquely determined truth
value, so
every satisfying assignment of the assembled formula obeys each local Life
equation.  Sharing
these definitions therefore preserves the equivalence on the whole patch.

The artifact reconstructs the three primary totalizer CNF formulas from the
stated patterns, layer domains, Life equations, fixed premises, and negated
targets; the resulting files are byte-for-byte identical to the supplied
files.  Separately, the binary-adder route checks local encoding semantics.
A third, cell-only encoding, written separately from the project encoder,
reconstructs the same three implications: each forbidden assignment of a
centre, its eight neighbours, and its output contributes a direct clause,
simplified using the fixed premises.  Its supplied LRAT certificates are checked by the formally verified
checker \texttt{cake\_lpr}~\cite{tan2021cakelpr}.

\paragraph{Certificate inventory.}
Table~\ref{tab:certificate-inventory} lists the three primary totalizer
formulas.
The clause counts include the final target-disagreement clause.
DRAT-trim~\cite{wetzler2014drat} checks the DRUP proofs of these formulas.
Their LRAT conversions~\cite{cruz-filipe2017lrat} are checked by
\texttt{lrat-check} from the same DRAT-trim revision and by the formally
verified checker \texttt{cake\_lpr}~\cite{tan2021cakelpr}.
For the marker and pump,
unsatisfiability proves Lemmas~\ref{lem:marker} and \ref{lem:pump} with
the premises listed in Table~\ref{tab:finite-domains}.  The first row recertifies the
same-position implication in Fact~\ref{fact:kynnos}.

\newcommand{\CertRow}[5]{#2 & #3 & #4 & #5\\}
\begin{table}[ht]
\centering
\caption{SAT certificates for the totalizer encoding.}
\label{tab:certificate-inventory}
\begin{tabularx}{\textwidth}{@{}>{\raggedright\arraybackslash}p{0.18\textwidth}>{\raggedright\arraybackslash}Xrr@{}}
\toprule
claim & implication & variables & clauses\\
\midrule
\CertRow{kynnos-core-self-force}{Kynn\"os}{$K@0$ forces $K@1$}{8,956}{37,093}
\CertRow{cap-spliced-marker-preimage-forces-marker}{marker}{$U@0$ and $R@1,\ R@2$ force $U@2$}{122,940}{529,034}
\CertRow{top-pump}{pump}{$S@0,\ S@2,\ S@4$ force $\shift_{(0,4)}S@4$}{11,456}{48,985}
\bottomrule
\end{tabularx}
\end{table}

The checked unsatisfiability of the three reconstructed formulas completes
the SAT proofs of Fact~\ref{fact:kynnos}, Lemma~\ref{lem:marker}, and
Lemma~\ref{lem:pump}.

%% file: main.bbl
\begin{thebibliography}{10}

\bibitem{gardner1970life}
Martin Gardner.
\newblock Mathematical games: The fantastic combinations of {John Conway's} new
  solitaire game ``{Life}''.
\newblock {\em Scientific American}, 223(4):120--123, 1970.

\bibitem{johnston2022conway}
Nathaniel Johnston and Dave Greene.
\newblock {\em Conway's Game of Life: Mathematics and Construction}.
\newblock Self-published, 2022.

\bibitem{rendell2002turing}
Paul Rendell.
\newblock Turing universality of the {Game of Life}.
\newblock In Andrew Adamatzky, editor, {\em Collision-Based Computing}, pages
  513--539. Springer, 2002.

\bibitem{salo-torma2022oracles}
Ville Salo and Ilkka T{\"o}rm{\"a}.
\newblock What can oracles teach us about the ultimate fate of {Life}?
\newblock In Miko{\l}aj Boja{\'n}czyk, Emanuela Merelli, and David~P. Woodruff,
  editors, {\em 49th International Colloquium on Automata, Languages, and
  Programming (ICALP 2022)}, volume 229 of {\em Leibniz International
  Proceedings in Informatics (LIPIcs)}, pages 131:1--131:20. Schloss
  Dagstuhl--Leibniz-Zentrum f{\"u}r Informatik, 2022.

\bibitem{salo-torma2025preimages}
Ville Salo and Ilkka~A. T{\"o}rm{\"a}.
\newblock Structure and computability of preimages in the {Game of Life}.
\newblock {\em Theoretical Computer Science}, 1042:115237, 2025.

\bibitem{ceccherini2010cellular}
Tullio Ceccherini-Silberstein and Michel Coornaert.
\newblock {\em Cellular Automata and Groups}.
\newblock Springer Monographs in Mathematics. Springer, Berlin, 2010.

\bibitem{kurka2003topological}
Petr K{\r u}rka.
\newblock {\em Topological and Symbolic Dynamics}, volume~11 of {\em Cours
  Sp\'ecialis\'es}.
\newblock Soci\'et\'e Math\'ematique de France, Paris, 2003.

\bibitem{lind-marcus1995}
Douglas Lind and Brian Marcus.
\newblock {\em An Introduction to Symbolic Dynamics and Coding}.
\newblock Cambridge University Press, 1995.

\bibitem{hu1987limit}
Lyman~P. Hurd.
\newblock Formal language characterizations of cellular automaton limit sets.
\newblock {\em Complex Systems}, 1(1):69--80, 1987.

\bibitem{culik1989limit}
Karel Culik, II, Jan Pachl, and Sheng Yu.
\newblock On the limit sets of cellular automata.
\newblock {\em SIAM Journal on Computing}, 18(4):831--842, 1989.

\bibitem{maass1995sofic}
Alejandro Maass.
\newblock On the sofic limit sets of cellular automata.
\newblock {\em Ergodic Theory and Dynamical Systems}, 15(4):663--684, 1995.

\bibitem{petersen1983ergodic}
Karl Petersen.
\newblock {\em Ergodic Theory}, volume~2 of {\em Cambridge Studies in Advanced
  Mathematics}.
\newblock Cambridge University Press, 1983.

\bibitem{bailleux2003totalizer}
Olivier Bailleux and Yacine Boufkhad.
\newblock Efficient {CNF} encoding of boolean cardinality constraints.
\newblock In Francesca Rossi, editor, {\em Principles and Practice of
  Constraint Programming -- CP 2003}, volume 2833 of {\em Lecture Notes in
  Computer Science}, pages 108--122. Springer, 2003.

\bibitem{tan2021cakelpr}
Yong~Kiam Tan, Marijn J.~H. Heule, and Magnus~O. Myreen.
\newblock {cake\_lpr}: Verified propagation redundancy checking in {CakeML}.
\newblock In Jan~Friso Groote and Kim~Guldstrand Larsen, editors, {\em Tools
  and Algorithms for the Construction and Analysis of Systems -- TACAS 2021},
  volume 12652 of {\em Lecture Notes in Computer Science}, pages 223--241.
  Springer, 2021.

\bibitem{wetzler2014drat}
Nathan Wetzler, Marijn J.~H. Heule, and Warren~A. Hunt, Jr.
\newblock {DRAT-trim}: Efficient checking and trimming using expressive clausal
  proofs.
\newblock In Carsten Sinz and Uwe Egly, editors, {\em Theory and Applications
  of Satisfiability Testing -- SAT 2014}, volume 8561 of {\em Lecture Notes in
  Computer Science}, pages 422--429. Springer, 2014.

\bibitem{cruz-filipe2017lrat}
Lu{\'i}s Cruz-Filipe, Marijn J.~H. Heule, Warren~A. Hunt, Jr., Matt Kaufmann,
  and Peter Schneider-Kamp.
\newblock Efficient certified {RAT} verification.
\newblock In Leonardo de~Moura, editor, {\em Automated Deduction -- CADE 26},
  volume 10395 of {\em Lecture Notes in Computer Science}, pages 220--236.
  Springer, 2017.

\end{thebibliography}
